\newcommand{\revise}[1]{{#1}}
\newtheorem{lemma}{Lemma}
\newtheorem{theorem}{Theorem}
\def\eg{\textit{e.g.}}
\def\ie{\textit{i.e.}}
\def\etal{\textit{et al.}}
\begin{document}
\title{Efficient Pipeline Planning for Expedited Distributed DNN Training
\thanks{This work was supported in part by Alibaba Group through Alibaba Innovative Research (AIR) Program, and grants from Hong Kong RGC under the contracts HKU 17204619, 17208920 and 17207621.}
}
\author{\IEEEauthorblockN{Ziyue Luo\IEEEauthorrefmark{1}, Xiaodong Yi\IEEEauthorrefmark{1}, Guoping Long\IEEEauthorrefmark{2}, Shiqing Fan\IEEEauthorrefmark{2}, Chuan Wu\IEEEauthorrefmark{1}, Jun Yang\IEEEauthorrefmark{2}, Wei Lin\IEEEauthorrefmark{2}} 
	\IEEEauthorblockA{\IEEEauthorrefmark{1}Department of Computer Science, The University of Hong Kong, Email: \{zyluo, xdyi, cwu\}@cs.hku.hk}
	\IEEEauthorblockA{\IEEEauthorrefmark{2}Alibaba, Email: longguoping@gmail.com, \{shiqing.fsq, muzhuo.yj, weilin.lw\}@alibaba-inc.com}}
\maketitle

\begin{abstract}
	 To train modern large DNN models, pipeline parallelism has recently emerged, which distributes the model across GPUs and enables different devices to process different microbatches in pipeline. Earlier pipeline designs allow multiple versions of model parameters to co-exist (similar to asynchronous training), and cannot ensure the same model convergence and accuracy performance as without pipelining. Synchronous 
	 pipelining has recently been proposed which ensures model performance by enforcing a synchronization barrier between training iterations. 
	 Nonetheless, the synchronization barrier requires waiting for gradient aggregation from all microbatches and thus delays the training progress. Optimized pipeline planning is needed to minimize such wait and hence the training time, which has not been well studied in the literature. This paper designs efficient, near-optimal algorithms for expediting synchronous pipeline-parallel training of modern large DNNs over arbitrary inter-GPU connectivity. Our algorithm framework comprises two components: a pipeline partition and device mapping algorithm, 
	 and a pipeline scheduler that decides processing order of microbatches over the partitions, which together minimize the per-iteration training time. We conduct thorough theoretical analysis, extensive testbed experiments and trace-driven simulation, and demonstrate our scheme can accelerate training up to 157\% compared with state-of-the-art designs. 
\end{abstract}

\section{Introduction}

Deep learning has advanced various applications in a wide range of domains \cite{devlin2018bert}\cite{he2016deep}\cite{cully2015robots}. 
Deep Neural Networks (DNNs) have significantly grown in size in recent years, in pursuit of better model accuracy. 
Training of large models over large datasets 
has promoted the rapid development of distributed DNN training frameworks (e.g., TensorFlow~\cite{abadi2016tensorflow}, PyTorch~\cite{adam2019pytorch}). 

A number of parallel-training paradigms have been adopted in practice.
Data parallelism~\cite{li2014scaling} partitions the training dataset among workers.  
Each worker holds a 
copy of the DNN, computes parameter updates using the local dataset, and synchronizes parameter updates with others periodically. 
AllReduce operation~\cite{sergeev2018horovod} is a common approach for parameter synchronization among workers. 
To handle large models which cannot be fit entirely into a single device's memory, model parallelism~\cite{shoeybi2019megatron} partitions a DNN model and places model partitions on different devices. In each training iteration, a mini-batch is processed by model partitions on the devices one after another, through forward propagation followed by backward propagation. 
Such vanilla model parallelism suffers from low device utilization, as only one device is active at each time when a mini-batch is trained 
across the devices hosting different model partitions. 
Pipeline parallelism~\cite{harlap2016addressing} has been proposed 
to maximize device utilization during training. 
Similar to model parallelism, it partitions a DNN into stages and places stages over multiple devices; 
it further partitions each mini-batch of training data into equal-sized microbatches, and 
allows different devices to process different microbatches at the same time (\ie, microbatch pipelining). 

Most works on pipeline parallelism~\cite{harlap2018pipedream}\cite{geng2019elasticpipe}\cite{Park2020hetpipe}\cite{pmlr-v139-narayanan21a} adopt asynchronous pipelining, by injecting microbatches into the training pipeline one by one and updating model parameters with gradients computed with a microbatch, whenever its backward propagation is done. 
Asynchronous pipeline parallelism maximizes GPU utilization by fully saturating the pipeline. 
However, the processing of different microbatches overlaps, each updating the model using gradients computed based on outdated parameters that are learned on different earlier microbatches, which
may inevitably slow down or even prevent training convergence, and render a model whose accuracy differs from that trained without pipelining~\cite{ho2013more}. 

To ensure model convergence and accuracy, synchronous pipeline parallelism has been advocated by a few recent studies~\cite{huang2019gpipe}\cite{fan2021dapple}. It enforces a synchronization barrier between training iterations, to aggregate gradients computed with all microbatches before applying them for model update. Such a synchronization barrier flushes the pipeline and introduces waiting time (for training completion of all microbatches) into each training iteration, leading to lower device utilization as compared to asynchronous pipeline training. Optimal planning of synchronous pipeline training is needed to 
improve device utilization and minimize per-iteration training time, to achieve similar training time as asynchronous pipelining while providing convergence and accuracy guarantees. Pipeline planning includes DNN model partition, replication and device placement, as well as scheduling the order of microbatch processing across the devices within each training iteration. 
 Non-trivial challenges exist, 
 as follows: 



\textit{First}, in a typical DNN model, layers are not uniform in terms of computation time, parameter size and activation size. Optimal model partition over devices is hence challenging. 

\textit{Second}, previous pipeline designs have been restricted to homogeneous inter-GPU connectivity (or homogeneous in each level of a hierarchical topology)~\cite{huang2019gpipe}\cite{harlap2018pipedream}. GPU inter-connectivity is often more complicated in a practical machine learning (ML) cluster, 
including PCI-e or NVLink within a physical server~\cite{dgx1}, RDMA or TCP network between servers~\cite{wang2013gpu}. 
We will show that heterogeneous GPU inter-connectivity leads to an exponential number of solutions for DNN model partition and device mapping (Sec.~\ref{sec::part}), adding to the difficulty of finding efficient, near-optimal solutions.

\textit{Third}, deciding the execution order of all microbatches over all devices, 
respecting inter-stage dependencies and minimizing per-iteration training time, 
falls in the category of job shop problems. 
The job shop problem is NP-hard~\cite{goldberg2001better} even with only two machines (aka GPUs in our case). 

Tackling the challenges, we design near-optimal algorithms 
that efficiently partition a given DNN model, replicate and distribute the partitions over available GPUs with arbitrary inter-GPU connectivity, and schedule microbatch processing over the stages to minimize per-iteration training time.
Our main techniques and contributions are summarized as follows:

$\triangleright$ Assuming model partition and device mapping are given, we 
design an efficient list ordering method to decide the processing order of microbatches on different GPUs, and then a scheduling algorithm that 
minimizes idle time of devices based on the order. With thorough theoretical analysis, we identify an upper bound of per-iteration training time, decided by two key factors: the number of stages that the DNN is partitioned into, and the maximum time to process a microbatch on a single stage or an inter-stage communication channel.


$\triangleright$ We are hence inspired to design a pipeline partition and device mapping algorithm to minimize the 
 maximum per-stage/channel execution time, given the number of stages to partition the model into. 
A recursive min-cut method is designed to identify a device order that maximizes inter-GPU bandwidth utilization. 
Based on the device order, we use dynamic programming to derive the optimal partition, replication and mapping
 solution. 

 $\triangleright$ 
Our complete synchronous pipeline planning algorithm, {\em SPP}, iteratively invokes the pipeline partition/mapping algorithm and the execution order scheduler to identify the best number of partitioned stages and the set of near-optimal pipeline execution strategies accordingly. 
We rigorously analyze {\em SPP} and prove a worst-case performance guarantee.
 
 $\triangleright$ We conduct extensive testbed experiments and trace-driven simulation, carefully comparing {\em SPP} with state-of-the-art pipeline-training paradigms, including 
 GPipe~\cite{huang2019gpipe}, PipeDream~\cite{harlap2018pipedream} and HetPipe~\cite{Park2020hetpipe}.
  Experimental results show that  {\em SPP} can accelerate training up to 147\% compared to GPipe, 157\% to PipeDream and 80\% to HetPipe in terms of per-iteration training time, 
  and achieves the target accuracy in the most expedited manner as compared to baselines. 
  We observe that {\em SPP} can strike a balance between the number of stages and the maximum per-stage execution/communication time in DNN partition, and maximally overlap communication and computation with its pipeline execution scheduling. 


\vspace{-2mm}
\section{Background and Related Work}
\label{sec::related_work}

\noindent{\bf DNN Training.}
A DNN model comprised of multiple layers is usually trained over a large dataset iteratively to minimize a loss function~\cite{goodfellow2016deep}. The dataset is typically divided into equal-sized mini-batches. In each training iteration, one mini-batch is processed to update the DNN model as follows: (1) {\em forward propagation (FP)}: the mini-batch is computed by each layer of the DNN sequentially to derive a loss; 2) {\em backward propagation (BP)}: gradients of model parameters are computed based on the loss from the last layer to the first layer; 3) 
a {\em gradient update} operation applies computed gradients to parameters in each layer with an optimization algorithm, \eg,~stochastic gradient descent (SGD) or adaptive moment estimation (Adam)~\cite{goodfellow2016deep}.

\vspace{-2mm}
\noindent{\bf DNN Model Partition and Device Mapping.}
A number of studies have focused on partition and device mapping strategies for large DNN models through learning-based methods~\cite{mirhoseini2017device}\cite{addanki2018placeto}\cite{yi2020optimizing}, which require large computing resources and long training time to derive a satisfying policy for one training job.
A few efforts~\cite{wu2020stanza}\cite{yi2020fast} exploit efficient heuristics for DNN model partition and device mapping at the operation level, requiring detailed cost modeling of the DNN model and accurate profiling of operation execution time.
Our work focuses on layer-level DNN model partitioning and mapping, and derives a polynomial-time pipeline planning strategy.

\vspace{-2mm}
\noindent{\bf Data Parallelism (DP) and Model Parallelism (MP)}
are commonly adopted to parallelize training across multiple devices. As shown in Fig.~\ref{fig_mp_pp}(a), with DP, 
three mini-batches 
are each trained on one GPU with a complete copy of model parameters; an AllReduce operation synchronizes computed gradients after training of all mini-batches.
With MP (Fig.~\ref{fig_mp_pp}(b)), in each training iteration, a mini-batch is fed into the device hosting the first stage(s) of the DNN model for FP, and the computed 
activations are passed to later stages on other devices for FP; during BP, gradients are computed and passed from one device to another following reverse sequence of the stage(s). In this way, only one device is active at each time, where \revise{FP} or BP of the mini-batch is being carried out, while other devices are idle, leading to low device utilization.

\begin{figure}[!t]
\centering
    \begin{subfigure}{0.89\columnwidth}
      \centerline{\includegraphics[width=\columnwidth]{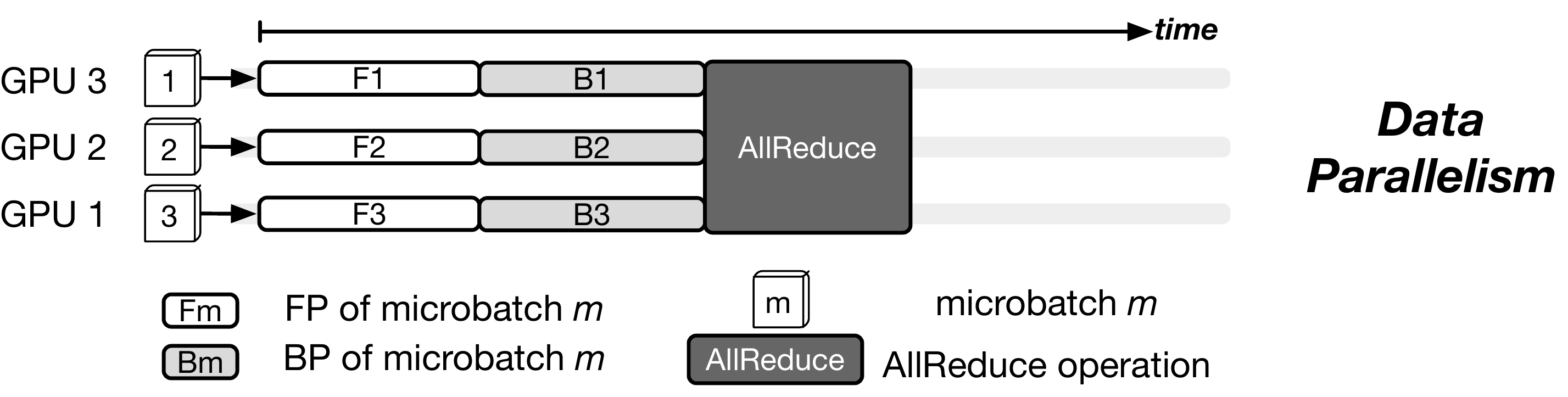}}
      \caption{Data Parallelism}
    \end{subfigure}
    \begin{subfigure}{0.89\columnwidth}
      \centerline{\includegraphics[width=\columnwidth]{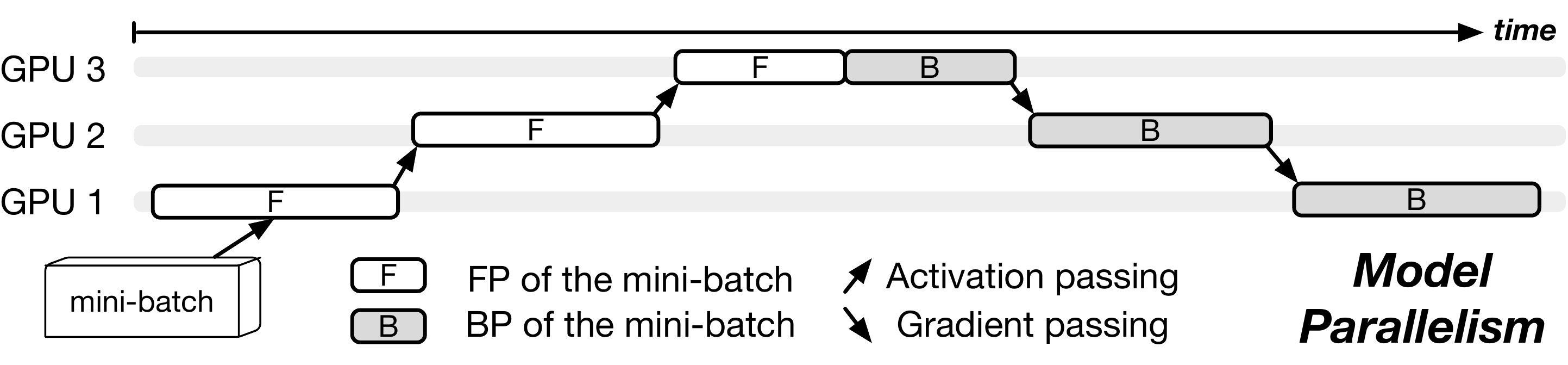}}
      \caption{Model Parallelism}
    \end{subfigure}
    \begin{subfigure}{0.89\columnwidth}
      \centerline{\includegraphics[width=\columnwidth]{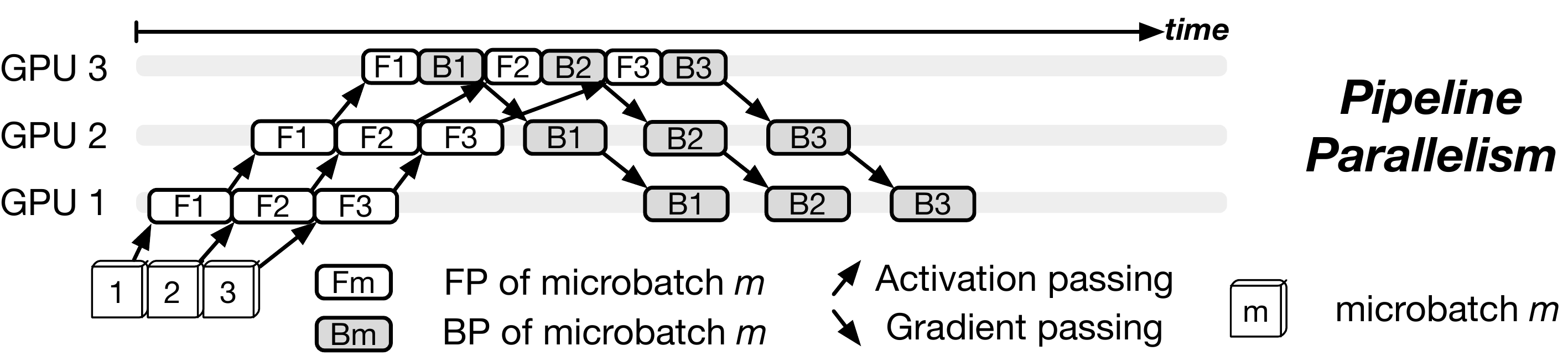}}
      \caption{Pipeline Parallelism}
    \end{subfigure}
    \caption{Data parallelism vs. model parallelism vs. pipeline parallelism: 
    1 mini-batch divided into 3 microbatches.}
    \label{fig_mp_pp}
\end{figure}
\setlength{\textfloatsep}{0pt}

\vspace{-2mm}
\noindent\textbf{Pipeline Parallelism}
\label{ppparallelism}
Based on model parallelism, \textit{pipeline parallelism} further divides a mini-batch into equal-sized microbatches (Fig.~\ref{fig_mp_pp}(c)).
The microbatches 
are consecutively fed into the device hosting the first stage(s) whenever the forward computation of the previous microbatch is done on this device, rendering a training pipeline. Consequently, it enables multiple devices to 
process different microbatches simultaneously. 

{\em Asynchronous Pipeline Training.} PipeDream~\cite{harlap2018pipedream} partitions a DNN model over multiple servers, allowing stage replication among the servers, and further divides a stage over GPUs within each server, aiming at minimizing the maximum time to process a single stage. Its server configuration and inter-server connectivity are both homogeneous. Stage execution is scheduled to ensure that every FP stage is immediately followed by a BP stage. 
Geng \etal~\cite{geng2019elasticpipe} study pipeline parallelism over heterogeneous GPUs, and propose a dynamic tuning algorithm to identify straggler devices and redistribute the DNN model for better load balance. 
In HetPipe~\cite{Park2020hetpipe}, each node (comprised of homogeneous GPUs) trains the DNN model in a pipelined manner similar to PipeDream without stage replication; 
DP is used for training and parameter synchronization among nodes. 
With asynchronous pipelining, microbatches are trained on outdated versions of model parameters to compute gradients, 
leading to slow model convergence 
and lower accuracy of the obtained model as compared to synchronous training~\cite{ho2013more}. 
Several studies have investigated mitigating the accuracy loss 
via weight prediction~\cite{chen2018efficient} or randomized smoothing~\cite{colin2019theoretical}, under restricted assumptions of training loss functions.  

{\em Synchronous Pipeline Training.} 
GPipe~\cite{huang2019gpipe} is a synchronous pipeline training framework, 
including (1) a partition strategy that ensures approximately the same number of DNN layers on each GPU, and (2) a schedule to execute all FP before starting any BP. It does not allow stage replication and provides no device mapping strategies.
We design efficient algorithms to deal with all aspects of synchronous pipeline planning. 

\section{System Model}
\label{sec::sys}


\subsection{DNN Model and Device Connectivity}

We consider a DNN model, $\mathcal{D}$, consisting of $L$ layers, \eg, attention layers, convolutional and fully-connected layers. In each training iteration, a mini-batch is divided into $M$ equal-sized microbatches of size $Z$ each. Every microbatch is trained through FP through all $L$ layers, followed by BP over the $L$ layers in the reverse order. We divide $\mathcal{D}$ into multiple {\em stages}, place the stages on different GPUs, and allow different GPUs to process different microbatches simultaneously in the pipelined manner. Following the end of BP of all microbatches, 
a gradient aggregation operation aggregates gradients computed from all microbatches 
and applies 
them to update the model parameters.
As time needed for gradient aggregation and apply is much shorter than FP/BP time , we ignore it in our pipeline parallelism design.

$V$ homogeneous GPUs on multiple physical servers are available for training this DNN model.\footnote{Training a DNN using GPUs of the same type is the norm in today's production systems, based on our discussions with leading AI cloud operators.} We 
consider a variety of GPU inter-connectivity, 
including PCIe or NVLink 
{(providing direct GPU-GPU communication channel)} between GPUs
in the same physical server (\eg, in NVIDIA DGX-1~\cite{dgx1}), TCP or RDMA connections across GPUs in different servers~\cite{wang2013gpu}, 
 with various bandwidth levels. 
Graph $G(\mathcal{V}, \mathcal{E})$ represents the multi-GPU system for training $\mathcal{D}$, 
where $\mathcal{V}$ includes the $V$ GPUs 
and $\mathcal{E}$ contains all inter-connections between the GPUs. \revise{Each edge $(v, v')$ in $\mathcal{E}$ is associated with a weight, $b_{v, v'}$, representing the available bandwidth between GPU $v$ and GPU $v'$.} Let $b_{min}$ and $b_{max}$ be the minimum and maximum bandwidth among all edges in $\mathcal{E}$, respectively.

The forward (backward) computation time of a microbatch over layer $l$ of DNN $\mathcal{D}$ on a given GPU is $p^f_l$ ($p^b_l$). Let $\alpha_l$ be the size of parameters/gradients of layer $l$, which can be profiled through one trial run of the whole model using several training iterations.
$d^f_{l,l+1}$ denotes the size of activations passed from layer $l$ to layer $l+1$ during FP, and $d^b_{l+1,l}$ is the size of gradients transferred from layer $l+1$ to layer $l$ during BP.

\begin{figure*}[!th]
\centering
    \begin{subfigure}{0.7\columnwidth}
      \centerline{\includegraphics[width=0.9\columnwidth]{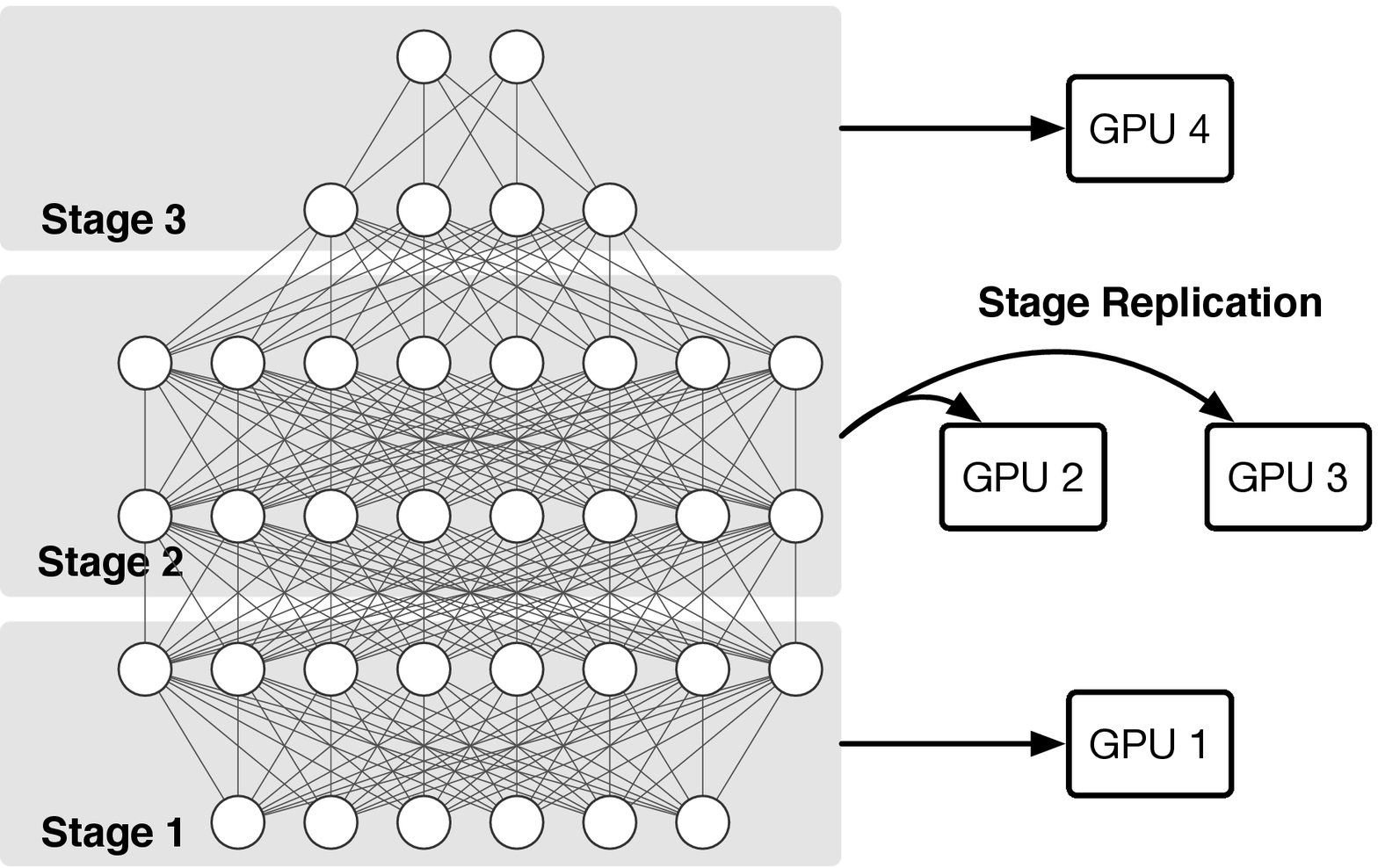}}
      \caption{Stage partition, replication and device mapping}
    \end{subfigure}
    \begin{subfigure}{1.0\columnwidth}
      \centerline{\includegraphics[width=1\columnwidth]{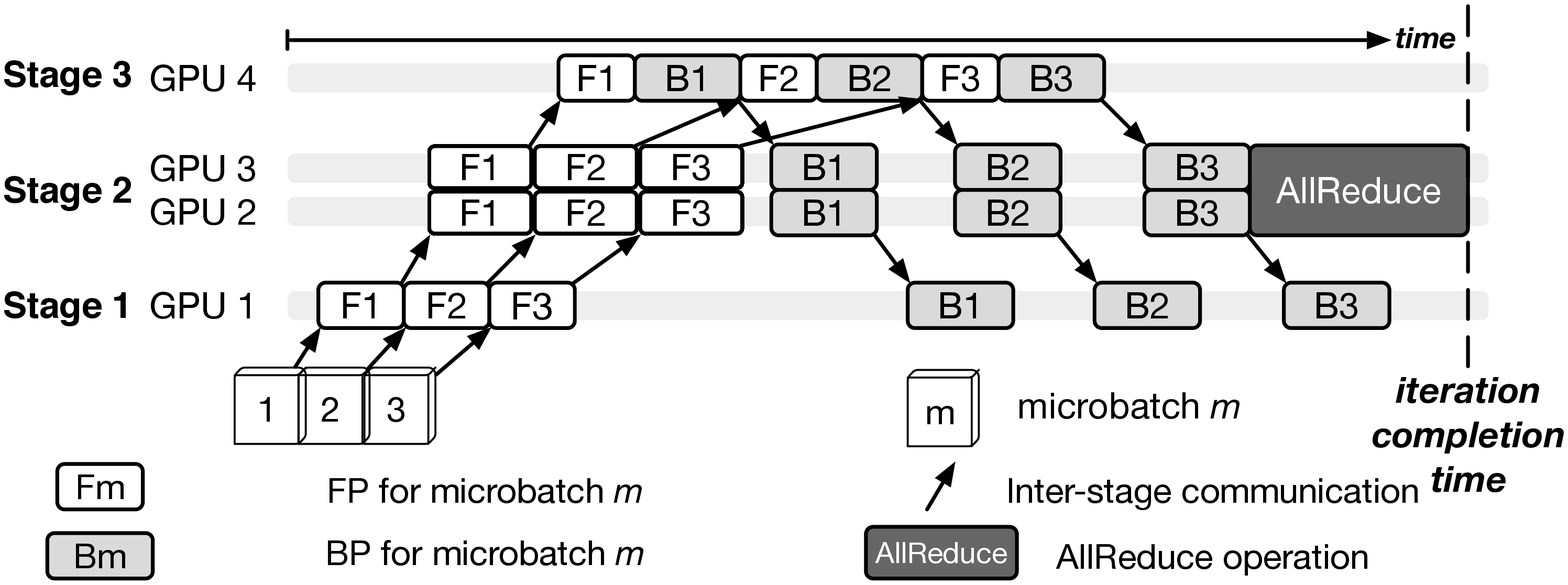}}
      \caption{Execution schedule of 3 microbatches}
    \end{subfigure}
    \caption{A pipeline parallelism design example}
    \vspace{-5mm}
    \label{fig_pipeline_planner}
\end{figure*}
\setlength{\textfloatsep}{0pt}

\vspace{-2mm}
\subsection{Synchronous Pipeline with Stage Replication}

We target synchronous pipeline design that minimizes per-iteration training time, 
including two subproblems:

\subsubsection{\bf Pipeline Partition, Replication and Device Mapping} 
\label{sec:partmapproblem}

We decide the stage set $\mathcal{S} = \{s_1, s_2, \ldots, s_{|\mathcal{S}|}\}$ with $|\mathcal{S}| \leq V$ to partition model $\mathcal{D}$
into, and a device mapping function $\mathcal{F}: \mathcal{S} \rightarrow \mathds{P}(\mathcal{V})$, where $\mathds{P}(\mathcal{V})$ 
includes all subsets of device set $\mathcal{V}$.

We consider classical \textit{interval partition}, 
that each 
stage consists of a number of consecutive layers: 
for stage $s \in \mathcal{S}$, if layer $l_{start}$ and layer $l_{end}$ belong to $s$, then all layers $l$, with $l_{start} \leq l \leq l_{end}$, belong to $s$. Without loss of generality, we assume a sequential dependency through $s_1, s_2, \ldots, s_{|\mathcal{S}|}$, \ie, the last layer $l^n_{end}$ in stage $s_{n}$ is the predecessor of the first layer $l^{n+1}_{start}$ in stage $s_{n+1}$ in the DNN model.

$\mathcal{F}$ maps each stage $s \in \mathcal{S}$ to one or multiple GPUs, ensuring that each GPU hosts exactly one stage or one replica of a stage. 
In our design, 
we allow a stage to be replicated and executed over multiple GPUs in a data-parallel fashion. Suppose stage $s$ is replicated over $k$ GPUs $\{v_1, v_2, \ldots, v_k\}$. 
Processing of a microbatch by stage $s$ is distributed over the $k$ GPUs (by evenly dividing input data among these GPUs), and 
{we assume that the forward (backward) computation time of each layer $l$ in stage $s$ on 
each replica device is now {\small $\frac{p^f_l}{k}  (\frac{p^b_l}{k}$)}.\footnote{We note that non-linear change of training time may happen when a layer is replicated, i.e., each layer replica's execution time is not exactly $\frac{1}{k}$ of the layer's processing time without input data partition. Our algorithm can be readily extended to the non-linear case by modeling the computation time of each layer given different input data sizes.} }
Fig.~\ref{fig_pipeline_planner}(a) gives an example, where a 6-layer DNN model is trained using 4 GPUs. 
The model is partitioned into three stages with stage 2 replicated over GPU 2 and GPU 3.

Such stage replication may improve GPU utilization and further balance stage processing time, together with stage partition. 
In Fig.~\ref{fig_pipeline_planner}, supposing the size of layers in stage 2 is much larger than the other stages, replicating stage 2 on two GPUs allows forward/backward computation time of the stage to be similar to others, as shown in Fig.~\ref{fig_pipeline_planner}(b). 
 
{\small
\begin{table}[!t]
\small\
\caption{NOTATION}
\begin{center}
\label{table_notation}
\begin{tabular}{|l|l|}
\hline
$\mathcal{D}$				& the DNN model\\
\hline
$L$							& \# of layers\\
\hline
$M$							& \# of microbatches in one iteration\\
\hline
$V$							& \# of GPUs\\
\hline
$\mathcal{G}(\mathcal{V}, \mathcal{E})$	& GPU inter-connection graph \\
& ($\mathcal{V}$: GPUs; $\mathcal{E}$: inter-GPU connections)\\
\hline
$b_{v, v'}$					& bandwidth between GPU $v$ and GPU $v'$\\
\hline
$b_{min} (b_{max})$			& minimum (maximum) bandwidth in $\mathcal{E}$\\
\hline
$p^f_l (p^b_l)$				& FB (BP) computation time of layer $l$  per \\& microbatch \\
\hline
$\alpha_l$					& size of parameters (gradients) of layer $l$\\
\hline
$d^f_{l,l+1}$ ($d^b_{l+1,l}$)			& size of activations (gradients) from \\& layer $l$ to  $l+1$ ($l+1$ to $l$) during FP (BP)\\
\hline
$\mathcal{S}$				& set of all stages that $\mathcal{D}$ is partitioned into\\
\hline
$\mathcal{S}_{repl}$		& set of replicated stages\\
\hline
$\mathcal{F}: \mathcal{S} \rightarrow \mathds{P}(\mathcal{V})$		& device mapping function from stages to sets\\&  of GPUs\\
\hline
{\small$c^f_{s_n, s_{n+1}}$}	& communication time between stages $s_n$ and\\
{\small $(c^b_{s_{n+1}, s_n})$}& $s_{n+1}$ during FP (BP)\\
\hline
$e^f_{m, s_n} (e^b_{m, s_n})$		& start time of stage $s_n$'s processing of \\& microbatch $m$ during FP (BP)\\
\hline
$A_s$						& time taken by AllReduce operation of stage $s$\\ 
\hline
$e^A_{s}$		& start time of AllReduce operation of stage $s$\\
\hline
\end{tabular}
\end{center}
\vspace{-2mm}
\end{table}
}

After completion of backward computation of microbatches on all $k$ replicas of stage $s$, a ring AllReduce operation~\cite{sergeev2018horovod} synchronizes gradients of stage $s$ across the $k$ GPUs. 
{Especially, the $k$ GPUs are organized into a logical ring topology, 
and each GPU exchanges gradients/parameters with its neighbors in the ring through inter-GPU  connections.}
 The size of communication data (gradients and parameters) involved in the AllReduce operation is $\frac{2(k-1)}{k}\sum_{l\in s}\alpha_l$ per GPU~\cite{allreduce_time}. The time taken by the AllReduce operation, denoted by $A_s$, is further 
 decided by the minimum connection bandwidth among the $k$ GPUs:

\vspace{-4mm}
{\small
\begin{eqnarray}
	\label{eqn_allreduce}
	A_s = \frac{2(k-1)\sum\limits_{l\in s}\alpha_l}{k\min\limits_{v,v'\in \{v_1, v_2, \ldots, v_k\}}b_{v,v'}}
\end{eqnarray}}
\vspace{-3mm}

\subsubsection{\bf Execution Scheduling}
\label{sec:scheduleproblem}

We also decide the execution order of processing each microbatch on each stage, as well as running the AllReduce operations for replicated stages. 
Let $e^f_{m, s_n}$ ($e^b_{m, s_n}$) be the start time of forward (backward) computation of microbatch $m$ 
 on stage $s_n$. 

Execution schedule should respect forward-backward dependency and stage dependency. Each GPU can only 
process one microbatch 
at a time. Let $c^f_{s_n, s_{n+1}}$ and $c^b_{s_{n+1}, s_{n}}$ represent the inter-stage communication time between stage $s_n$ and stage $s_{n+1}$ during FP and BP, respectively. We ignore the time for data passing between layers residing in the same GPU. We formulate the dependencies as follows. 

\begin{itemize}
	\item (\textit{Forward-backward dependency}): 
\end{itemize}
\vspace{-5mm}

{\small 
\begin{eqnarray*}
	e^f_{m, s_{|\mathcal{S}|}} + \frac{\sum\limits_{l \in s_{|\mathcal{S}|}}p^f_l}{|\mathcal{F}(s_{|\mathcal{S}|})|} \leq e^b_{m, s_{|\mathcal{S}|}}, \forall m\in \{1, \ldots, M\}
\end{eqnarray*}
}
\vspace{-5mm}

\begin{itemize}
	\item (\textit{Stage dependency}): 
\end{itemize}
\vspace{-5mm}

{\small
\begin{eqnarray*}
    e^f_{m, s_n} + \frac{\sum\limits_{l \in s_{n}}p^f_l}{|\mathcal{F}(s_{n})|} + c^f_{s_n, s_{n+1}} \leq e^f_{m, s_{n+1}}, \\
    \forall m\in \{1, \ldots, M\}, n\in \{1, \ldots, |\mathcal{S}|-1\}\\
%
	e^b_{m, s_n} + \frac{\sum\limits_{l \in s_{n}}p^b_l}{|\mathcal{F}(s_{n})|} + c^b_{s_n, s_{n-1}} \leq e^b_{m, s_{n-1}}, \\
	\forall m\in \{1, \ldots, M\}, n\in \{2, \ldots, |\mathcal{S}|\}\\
 	e^f_{1, s_1}=0
\end{eqnarray*}
}
\vspace{-6mm}


%

To compute inter-stage communication time, when $s_n$ and/or $s_{n+1}$ are replicated over multiple GPUs, we evenly distribute the data being transmitted across inter-stage links. For example in Fig.~\ref{fig_pipeline_comm}, $s_n$ is replicated onto 2 GPUs and $s_{n+1}$ onto 3 GPUs. 
During FP, 1/3 of the activations derived by each of the two GPUs hosting stage $n$ is sent to each of the three GPUs hosting stage $n+1$. During backward propagation, each GPU running $s_{n+1}$ splits its gradients (computed with a microbatch) into two sets of gradients corresponding to two smaller batches, and sends the two sets to the two GPU of $s_n$, respectively; each replica of $s_n$ sums up received gradients from replicas of $s_{n+1}$. 
The inter-stage communication time is decided by the minimum link bandwidth between GPUs in $\mathcal{F}(s_n)$ and in $\mathcal{F}(s_{n+1})$: 
(note $d^f_{l^n_{end},l^{n+1}_{start}}$ and $d^b_{l^{n+1}_{start}, l^n_{end}}$ are data size produced by an entire microbatch, \ie, sum of data produced by all replicas of a stage)
\vspace{-3mm}
\begin{figure}[!t]
  \centering
  \includegraphics[width=0.35\textwidth]{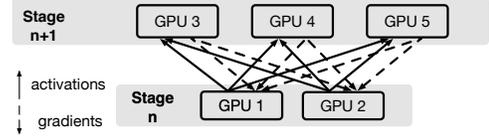}
  \vspace{-2mm}
  \caption{Inter-stage communication: an example}
  \label{fig_pipeline_comm}
\end{figure}
\setlength{\textfloatsep}{0pt}

{\small
\begin{eqnarray*}
	c^f_{s_n, s_{n+1}} = \frac{d^f_{l^n_{end}, l^{n+1}_{start}}}{|\mathcal{F}(s_{n})||\mathcal{F}(s_{n+1})|\min\limits_{v\in\mathcal{F}(s_n), v'\in\mathcal{F}(s_{n+1})}b_{v,v'}}
\end{eqnarray*}
\vspace{-6mm}

\begin{eqnarray*}
	c^b_{s_{n+1}, s_{n}} = \frac{d^b_{l^{n+1}_{start}, l^n_{end}}}{|\mathcal{F}(s_{n})||\mathcal{F}(s_{n+1})|\min\limits_{v\in\mathcal{F}(s_n), v'\in\mathcal{F}(s_{n+1})}b_{v',v}}
\end{eqnarray*} 
}
\vspace{-4mm}

Let $\mathcal{S}_{repl}$ be the set of replicated stages, and $e^A_{s}$ denote the start time of AllReduce operation of stage $s\in \mathcal{S}_{repl}$. We further have:

$\bullet$ (\textit{AllReduce operation dependency}):

\vspace{-2mm}
{\small
\begin{displaymath}
	e^b_{m, s} + \frac{\sum\limits_{l \in s}p^b_l}{|\mathcal{F}(s)|} \leq e^A_{s}, \forall m\in \{1,\ldots, M\}, s\in \mathcal{S}_{repl}
\end{displaymath}
}
\vspace{-3mm}

We aim at minimizing the makespan of training all $M$ microbatches, \ie, the per-iteration training time of the DNN:

\vspace{-4mm}
{\small
\begin{equation}
\mbox{minimize } \max\{\max_{m\in\{1, 2, \ldots, M\}} (e^b_{m, s_1} + \frac{\sum\limits_{l \in s_{1}}p^b_l}{|\mathcal{F}({s_1})|}), \max_{s\in\mathcal{S}_{repl}}(e^A_{s} + A_s)\}
\label{eqn_makespan}
\end{equation}}
\vspace{-4mm}

An AllReduce operation and inter-stage communication do not share the same inter-GPU connections: the former uses links between GPUs hosting replicas of the same stage, while the latter is between GPUs hosting different stages (also recall that one GPU can only host (a replica of) one stage). The AllReduce operation of a replicated stage $s_n$ can take place at the same time as inter-stage communication and backward computation of stages $s_{n-1}, s_{n-2}, \ldots$, as well as AllReduce operations of other replicated stages. Therefore, the completion time of a training iteration in (\ref{eqn_makespan}) is decided by the latest among backward computation completion time of all microbatches over stage $1$ and end time of AllReduce operations of all replicated stages.
Note that our schedule may not ensure that microbatches are processed in the same sequence at each stage; instead, we enforce a synchronization barrier in each training iteration, as represented by the inside $\max$ over all microbatches in (\ref{eqn_makespan}).

An example execution schedule is given in Fig.~\ref{fig_pipeline_planner}(b).  
As stage 2 is replicated over two GPUs, an AllReduce operation happens when backward computation of all three microbatches over both GPUs has been done, ensuring the model parameters on GPU 2 and GPU 3 are updated the same.

\vspace{-2mm}
\section{Pipeline Planning Algorithms}
\vspace{-1mm}
We now design algorithms for efficient synchronous pipeline training. We start with execution scheduler design, assuming model partition and device mapping are given; then we devise the partition and device mapping algorithm that minimizes per-iteration training time together with the execution scheduler.

\vspace{-2mm}
\subsection{
Execution Scheduler}
\label{sec::schedule}
\vspace{-1mm}

Given model partitions $\mathcal{S}$ and device mapping $\mathcal{F}$, our scheduling problem, as presented in Sec.~\ref{sec:scheduleproblem}, is a special case of the 
NP-hard job shop problem~\cite{goldberg2001better}: microbatches 
correspond to jobs of the same type and stages correspond to machines in the job shop problem, 
and the objective is to minimize the total time of executing all jobs. We design an efficient 
{\em pipeline execution} (PE) scheduling algorithm to achieve a proven performance bound.

The PE algorithm contains two modules: 
1) an 
ordering method to decide execution order of microbatches over stages on different GPUs, and 2) an 
algorithm that schedules pipeline execution based on the computed order.

\textit{\textbf{1) 
Execution ordering}}: We define a \textit{computation block} as the forward or backward computation of a stage. 
As backward computation of the last stage $s_{|\mathcal{S}|}$ follows immediately forward computation of $s_{|\mathcal{S}|}$, 
we merge stage $s_{|\mathcal{S}|}$'s forward and backward computation blocks into a single computation block.  
We define the inter-stage communication from $s_n$ to $s_{n+1}$ or from $s_{n+1}$ to $s_n$ to be a \textit{communication block}, including all communication over this {\em communication channel}, \ie, the set of connections from GPU(s) hosting the former stage to GPU(s) hosting the latter stage.
The end-to-end training of every microbatch in a training iteration involves $2|\mathcal{S}|-1$ computation blocks, $2|\mathcal{S}|-2$ communication blocks and $|\mathcal{S}_{repl}|$ AllReduce operations for replicated stages. Let $\mathcal{J}=\{1, 2, \ldots, 4|\mathcal{S}|-3\}$ be the ordered list of all computation and communication blocks, with blocks ordered according to their execution dependencies.

An execution order queue, $U_s$, is maintained for each stage $s \in \mathcal{S}$, containing $\tt{(microbatch\ index, }$ $\tt{block\ number)}$ pairs indicating the order of processing microbatches by forward or backward computation blocks of stage $s$. 



For each block $j \in \mathcal{J}$, we maintain an \textit{available microbatch} queue $Q_j$, 
containing microbatches which have been processed by block $j-1$ but not 
by $j$. 
Initially $Q_1$ includes all microbatches in order of their indices, 
and $Q_j = \emptyset, \forall j \in \mathcal{J}/\{1\}$.

We order microbatch processing over the 
blocks as follows. According to the order of blocks in $\mathcal{J}$, we pop out one microbatch $m$ at the head of a non-empty 
queue $Q_{j}$, and push it to the end of queue $Q_{j+1}$ of the next block (if $j$ is not the last block in $\mathcal{J}$); if 
block $j$ is a computation block of stage $s$, we add $(m, j)$ to 
execution order queue $U_s$. Going through the block list, we identify at most one microbatch to be processed by each block, 
corresponding to microbatches that can be processed about simultaneously.
We loop through the block list repeatedly 
until all available microbatch queues are empty ($Q_{j} = \emptyset, \forall j \in \mathcal{J}$), \ie, end-to-end training of all microbatches is ordered. 

\textit{\textbf{2) Scheduling}}: 
We next 
exploit the execution order queues, $U_s$'s, and schedule a microbatch's processing on a block as soon as it is ready. We start by popping the first (microbatch index, block number) out of queue $U_{s_1}$ of the first stage $s_1$, and process the corresponding microbatch on the respective block. Once a computation block is executed, the successor communication block is immediately run (upon the communication channel becoming idle). 
Upon processing completion of a scheduled computation block of stage $s$ or a communication block which transmits data to stage $s$, 
we examine queue $U_s$:
if the first (microbatch index, block number) in $U_s$ is ready to be executed (\ie, the microbatch has been processed by the precedent block), we pop it out and run it. 
This procedure terminates when $U_s = \emptyset, \forall s \in \mathcal{S}$, \ie, all microbatches have been processed by all computation and communication blocks. 

For each replicated stage $s \in \mathcal{S}_{repl}$, when all microbatches have been processed by backward computation block of this stage, the corresponding AllReduce operation is executed. 

{\small
\begin{algorithm}[!th]

\caption{Pipeline Execution Scheduler - \textbf{\textit{PE}}}
\label{alg_schedule}

\renewcommand{\algorithmicrequire}{\textbf{Input:}}
\renewcommand{\algorithmicensure}{\textbf{Output:}}

\begin{algorithmic}[1]
\REQUIRE $\mathcal{G}(\mathcal{V}, \mathcal{E}), \mathcal{S}, \mathcal{F}: S \rightarrow \mathbb{P}(\mathcal{V})$
\ENSURE $T_{PE}, \textbf{e}$ 

\STATE Initialize execution order queues $U_{s} \leftarrow \emptyset, \forall s \in \mathcal{S}$
\STATE Initialize available microbatch queues $Q_{1} \leftarrow \{1, 2, \ldots, M\}$ and $Q_{j} \leftarrow \emptyset, \forall j \in \mathcal{J}/\{1\}$
\WHILE {$\exists j \in \mathcal{J}, Q_{j} \neq \emptyset$}
	\FOR {$j \in \mathcal{J}: Q_{j} \neq \emptyset$}
			\STATE \revise{Pop one microbatch $m$ out from the head of $Q_{j}$, and push $m$ to the end of $Q_{j+1}$ if $j < |\mathcal{J}|$}
			\STATE Add $(m,j)$ to the corresponding $U_{s}$ if block $j$ is a computation block
	\ENDFOR
\ENDWHILE
\STATE Pop the first $(1, 1)$ out of $U_{s_1}$, and set $e^f_{1,s_1}=0$
\WHILE {$\exists s \in \mathcal{S}$, a block of stage $s$ completes or a communication block which transmits data to stage $s$ finishes at time $t$}
    \IF {$s \in \mathcal{S}_{repl}$ and $U_{s} = \emptyset$}
		\STATE Start AllReduce operation, and set $e^A_{s}$ to $t$
	\ENDIF
	\IF {$U_{s} = \emptyset, \forall s \in \mathcal{S}$}
		\STATE \textbf{break}
	\ENDIF
	\IF {a computation block of $s$ finishes}
		\STATE Start successor communication block
	\ENDIF
	\IF {the first (microbatch index, block number) in $U_{s}$ is ready}
		\STATE Pop (microbatch index, block number) out of $U_{s}$
		\STATE Start the block and set the $e^f_{m,s}$ or $e^b_{m,s}$ to $t$
	\ENDIF
\ENDWHILE
\STATE Calculate the makespan: {\footnotesize $T_{PE}=\max\{\max_{m\in\{1, 2, \ldots, M\}} (e^b_{m, s_1} + \frac{\sum\limits_{l \in s_{1}}p^b_l}{|\mathcal{F}({s_1})|}), \max_{s\in\mathcal{S}_{repl}}(e^A_{s} + A_s)\}$} 
\STATE Return $T_{PE}$, $\textbf{e}$
\end{algorithmic}
\end{algorithm}
\setlength{\textfloatsep}{0pt}
}

We summarize our pipeline execution scheduling algorithm in Alg.~\ref{alg_schedule}.  
The following lemma gives an upper bound of the per-iteration training time achieved by this PE algorithm.

\begin{lemma}
\label{lemma_ss}
Per-iteration training time achieved by Alg.~\ref{alg_schedule}, $T_{PE}$, is no larger than {\small $(1 + \frac{4|\mathcal{S}|-4}{M})M\mathcal{C} + \max_{s\in\mathcal{S}_{repl}}\{A_s\}$}, where {\small $\mathcal{C} = \max\{\max_{n \in \{1, \ldots, |\mathcal{S}|\}} \frac{\sum_{l \in s_n}(p^f_l + p^b_l)}{|\mathcal{F}(s_n)|}, \allowbreak\max_{n\in \{1, \ldots, |\mathcal{S}|-1\} 
	}\{c^f_{s_{n}, s_{n+1}}+c^b_{s_{n+1}, s_n}\}\}$}, denoting the maximum time to process a microbatch on a single stage (including both forward and backward computation) or an inter-stage communication channel (including data transfer in both forward and backward propagation phases), without considering AllReduce operations.
\end{lemma}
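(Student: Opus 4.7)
The plan is to decompose the bound into a throughput term $M\mathcal{C}$, a pipeline fill/drain term $(4|\mathcal{S}|-4)\mathcal{C}$, and an AllReduce overhead term $\max_{s\in\mathcal{S}_{repl}} A_s$. First I would analyze the list-ordering module and observe that, because $Q_1$ is initialized to $\{1,2,\ldots,M\}$ and the inner loop shifts at most one microbatch from each $Q_j$ to $Q_{j+1}$ in FIFO order, microbatch $m$ is the $m$-th element enqueued into every $Q_j$ and hence appears as the $m$-th entry of every $U_s$. Thus the PE scheduler never has to re-sort microbatches at any stage, and the order in which microbatches are released into each block is precisely their index order.

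Next, let $t_{m,j}$ denote the time at which the PE scheduler starts microbatch $m$ on block $j$ (treating computation and communication blocks uniformly, so that the chain has length $K = 4|\mathcal{S}|-3$). I would prove by induction on the lexicographic order of $(j,m)$ that $t_{m,j} \leq (m-1+j-1)\mathcal{C}$, using $t_{1,1}=0$ as base case. For the step, the PE scheduler launches microbatch $m$ on block $j$ at the earliest time at which (i) block $j-1$ has finished processing microbatch $m$ and (ii) block $j$ has finished processing microbatch $m-1$; by definition of $\mathcal{C}$, each per-microbatch service time is at most $\mathcal{C}$, so $t_{m,j} \leq \max\{t_{m,j-1}+\mathcal{C},\ t_{m-1,j}+\mathcal{C}\}$, and the induction closes. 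Applying this to $m=M$ and the final block of the chain gives that the backward computation of microbatch $M$ on stage $s_1$ completes by time $(M + 4|\mathcal{S}|-4)\mathcal{C}$, which equals the first term of the claimed bound.

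Finally, I would handle the AllReduce term. For any $s \in \mathcal{S}_{repl}$, the scheduler triggers $e^A_s$ the moment $U_s$ empties, i.e.\ the moment the backward computation block of $s$ has processed all $M$ microbatches; by the inductive bound above, $e^A_s \leq (M + 4|\mathcal{S}|-4)\mathcal{C}$. Since AllReduce links (intra-stage replica edges) are disjoint from inter-stage communication channels, $A_s$ adds on top without being blocked by pipeline traffic, giving $e^A_s + A_s \leq M\mathcal{C} + (4|\mathcal{S}|-4)\mathcal{C} + \max_{s\in\mathcal{S}_{repl}} A_s$. Taking the outer $\max$ in (\ref{eqn_makespan}) with the bound from the previous paragraph yields the lemma.

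The main obstacle will be step (ii) of the induction: showing that the PE scheduler never lets an idle gap wider than $\mathcal{C}$ open up on any block between consecutive microbatches. This is delicate because the scheduler is event-driven and only reacts at block completions, so in principle a long preceding block could starve a later block and upset the recurrence. I would close the gap by exploiting the FIFO property of $U_s$ together with the fact that, once steady state is reached on a block, its predecessor produces a fresh microbatch at intervals of at most $\mathcal{C}$; any additional waiting is paid for only once, in the pipeline fill-up term captured by the $(j-1)\mathcal{C}$ component of the inductive bound.
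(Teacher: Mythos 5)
Your overall strategy --- a flow-shop--style induction $t_{m,j}\le (m+j-2)\mathcal{C}$ over microbatch--block pairs, yielding a throughput term $M\mathcal{C}$ plus a fill/drain term $(4|\mathcal{S}|-4)\mathcal{C}$ --- is genuinely different from the paper's argument, which instead compares the PE schedule against an explicit \emph{cycle scheduler} that in each cycle runs at most one available microbatch on every block, bounds each cycle's length by $\mathcal{C}$, and counts $M+4|\mathcal{S}|-4$ cycles. However, your induction has a real gap: the blocks in $\mathcal{J}$ are not hosted on disjoint resources. The forward and backward computation blocks of stage $s_n$ run on the \emph{same} GPU(s), and the forward and backward communication blocks between $s_n$ and $s_{n+1}$ use the \emph{same} channel; this is precisely why $\mathcal{C}$ is defined per resource as the \emph{sum} $\frac{\sum_{l\in s_n}(p^f_l+p^b_l)}{|\mathcal{F}(s_n)|}$ (resp.\ $c^f_{s_n,s_{n+1}}+c^b_{s_{n+1},s_n}$) rather than a per-block maximum. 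Consequently your recurrence $t_{m,j}\le\max\{t_{m,j-1}+\mathcal{C},\,t_{m-1,j}+\mathcal{C}\}$ omits a third dependency: microbatch $m$ on block $j$ may also have to wait for the sibling block sharing the same GPU or channel, and in the PE scheduler that waiting is dictated by the position of $(m,j)$ in the interleaved queue $U_s$, not by index order. Relatedly, your claim that microbatch $m$ is ``the $m$-th entry of every $U_s$'' is false: for $s\neq s_{|\mathcal{S}|}$, $U_s$ holds $2M$ entries (one forward and one backward pair per microbatch) interleaved by the round-robin ordering pass, so FIFO-by-index does not describe the execution order on a stage.

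Your closing paragraph acknowledges that a block could be starved, but the fix you sketch (steady-state production every $\mathcal{C}$) addresses only predecessor starvation, not contention with the sibling block on the shared resource. The clean way to absorb that contention is the paper's cycle accounting: in each cycle every shared resource executes at most one forward and one backward unit, whose combined time is at most $\mathcal{C}$ by definition, and the PE schedule is then argued to dominate the cycle schedule because it never delays a block that the cycle scheduler would already be able to run. If you wish to keep your induction, you would need to strengthen the hypothesis to track both blocks of a resource jointly (e.g., bound the completion time of the $k$-th unit of work on each resource), at which point you essentially recover the cycle argument. Your treatment of the AllReduce term is fine once the main bound is in place, and matches the paper's.
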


\begin{proof}
{Given the execution order computed with lines 3-8 in Alg.~\ref{alg_schedule}, we consider a new \textit{cycle scheduling algorithm} whose per-iteration training time, denoted by $T_{CS}$, serves as an upper-bound of $T_{PE}$. 
In every cycle, we schedule to execute every computation/communication block $j \in \mathcal{J}$ for at most one microbatch if available.
Our cycle scheduler starts by entering the first cycle, and only schedules the execution of the first microbatch for the forward computation block of stage 1 as all other blocks are not available. After the execution of all available blocks in the current cycle, our cycle scheduler transits to the next cycle, and checks the availability of all blocks again. If a block has at least one microbatch that is available, we schedule the execution of one of the available microbatches for the block. The scheduler ends when every microbatch has been processed by all the blocks,
In addition, for each replicated stage, we execute the corresponding AllReduce operation immediately upon all microbatches have been processed by backward computation block of this stage. 
Consequently, the execution time of every cycle is at most 
{\small $\mathcal{C} = \max\{\max_{n \in \{1, \ldots, |\mathcal{S}|\}} \frac{\sum_{l \in s_n}(p^f_l + p^b_l)}{|\mathcal{F}(s_n)|}, \allowbreak\max_{n\in \{1, \ldots, |\mathcal{S}|-1\}}\{c^f_{s_{n}, s_{n+1}}+c^b_{s_{n+1}, s_n}\}\}$}, representing the maximum time to process a microbatch on a single stage (including both forward and backward computation) or an inter-stage communication channel (including data transfer in both forward and backward propagation phases), without considering AllReduce operations.

Fig.~\ref{fig_pipeline_cycle} shows an example of our cycle-based schedule of six microbatches in one training iteration. The DNN model is partitioned into two stages where the second stage is replicated over multiple GPUs. In the first cycle, we only process microbatch 1 on the forward computation block in stage 1. Then in the second cycle, two blocks have available microbatches to be executed: microbatches 2-6 on forward computation block in stage 1; and microbatch 1 on forward communication block in communication channel 1. Hence, we schedule the execution of the two blocks once, \ie~CF1 and F2 in the figure. It takes 10 cycles to execute all microbatches on all the blocks.
Every cycle starts upon all the blocks scheduled in the previous cycle finishes, and strictly schedules to execute at most one microbatch on each block.


\begin{figure*}[!t]
  \centering
  \includegraphics[width=0.8\textwidth]{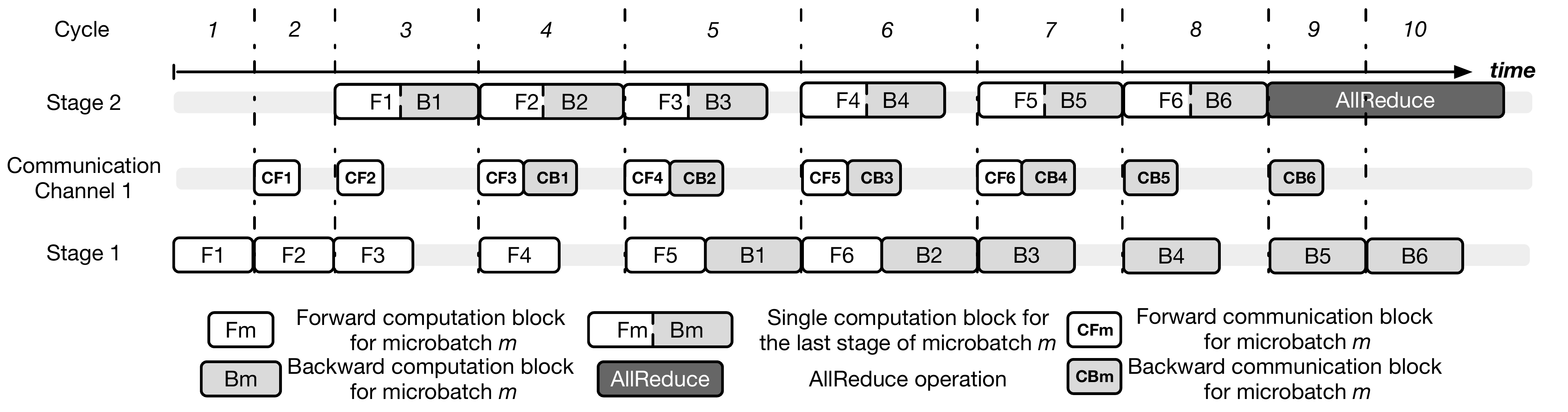}
  \caption{Cycle schedule: an example}
  \label{fig_pipeline_cycle}
\end{figure*}

Given that the DNN model is partitioned into $N$ stages with $M$ microbatches, there are two cases:
	
	$\triangleright$ \textbf{Case 1:} $M > 4|\mathcal{S}|-4$
	
	In this case, we first perform $4|\mathcal{S}|-4$ cycles before entering a cycle where every block $j \in \mathcal{J}$ has at least one available microbatch. Afterwards, we perform another $M-4|\mathcal{S}|+4$ cycle until all the microbatches have been processed by the first block. We further perform $4|\mathcal{S}|-4$ cycles to finish all microbatches.
	
	$\triangleright$ \textbf{Case 2:} $M \leq 4|\mathcal{S}|-4$
	
	Similarly to the first case, we perform in total $M + 4|\mathcal{S}| - 4$ cycles to execute all the microbatches.
	
In conclusion, we perform in total $M + 4|\mathcal{S}| - 4$ cycles in either cases, with the time for each cycle no greater than $\mathcal{C}$.
As a result, we have:

{\small
\begin{displaymath}
	T_{CS} \leq (M + 4|\mathcal{S}| - 4)\mathcal{C} + \max_{s\in\mathcal{S}_{repl}}\{A_s\}
\end{displaymath}}

The per-iteration training time achieved by the cycle scheduling algorithm, $T_{CS}$, is an upper bound of the per-iteration training time achieved by Alg.~\ref{alg_schedule}, $T_{PE}$. This is because Alg.~\ref{alg_schedule} schedules the execution of microbatches on the same block without interruption if available, while in the cycle scheduling algorithm, the execution of blocks waits until all blocks in the previous cycle have been executed. 
We have proven that

{\small
\begin{displaymath}
	T_{PE} \leq T_{CS} \leq (1 + \frac{4|\mathcal{S}|-4}{M})M\mathcal{C} + \max_{s\in\mathcal{S}_{repl}}\{A_s\}
\end{displaymath}}
}
\end{proof}

\vspace{-2mm}
\subsection{
DNN Partition and Device Mapping Algorithm} 
\label{sec::part}
\vspace{-1mm}

Lemma~\ref{lemma_ss} shows that the per-iteration training time 
is positively related to the number of stage $|\mathcal{S}|$ that DNN $\mathcal{D}$ is partitioned into, the maximum time $\mathcal{C}$ to process a microbatch on a single stage or communication channel, and the maximum AllReduce operation time among replicated stages. 
The number of stage partitions, $|\mathcal{S}|$, is at most the same as the number of GPUs, $V$. We next design a model partition and device mapping algorithm aiming at minimizing the maximum time, $\mathcal{W}(|\mathcal{S}|)$, to process all microbatches on a single stage (including AllReduce operations) or 
a 
communication channel, given the number of stages $|\mathcal{S}|$. 
The purpose is to minimize the upper bound of per-iteration training time in Lemma \ref{lemma_ss}, as $\mathcal{W}(|\mathcal{S}|)$ is related to both $\mathcal{C}$ and $\max_{s\in\mathcal{S}_{repl}}\{A_s\}$: 

\vspace{-4mm}
{\small
\begin{eqnarray}
	\mathcal{W}(|\mathcal{S}|) = \max\{&\max_{s\notin \mathcal{S}_{repl}}\{M\sum\limits_{l\in s}(p^f_l + p^b_l)\},\nonumber\\
	&\max_{s\in \mathcal{S}_{repl}}\{M\frac{\sum\limits_{l\in s}(p^f_l + p^b_l)}{|\mathcal{F}(s)|} + A_s\},\nonumber\\
	&\max_{n\in \{1,\ldots, |\mathcal{S}|-1\}} 
				 \{M(c^f_{s_n, s_{n+1}}+c^b_{s_{n+1}, s_n})\}\}\nonumber
\end{eqnarray}}
\vspace{-6mm}

Our DNN partition and device mapping problem, 
without considering stage replication, can be reduced to the NP-complete problem of pipeline partition over a heterogeneous communication platform~\cite{benoit2008mapping}, which partitions a workflow among a cluster of devices with heterogeneous connectivity to maximize the pipeline throughput. We design an efficient \textit{balanced pipeline partition and device mapping algorithm} (BPPM) 
to derive a near-optimal solution, which includes two components: 1) a device ordering module that calculates a linear ordering of all GPUs; 
and 2) an 
algorithm that partitions the DNN model onto GPUs respecting the device order.

\textit{\textbf{1) Recursive device ordering (RDO)}}: 
We decide a linear ordering of GPUs in $\mathcal{V}$: 
$\{v_1, v_2, \ldots, v_{V}\}$. 
We will map stages (stage replicas) to devices according to this 
ordering and stage dependencies, \ie, (replicas of) the first stage mapped to the device(s) at the head of the device ordering, and then the next stage to latter device(s), etc.
We target a linear ordering with maximal bandwidth between consecutive GPUs, 
such that the bandwidth between stages and between replicas of the same stage is maximized. 
A recursive min-cut algorithm is designed to find the ordering 
in polynomial time, as given in Alg.~\ref{alg_do}. $rank_l$ ($rank_h$) represents the lowest (highest) rank of devices in the current subgraph in the ordering, initialized to $1$ and $V$, respectively (in the complete Alg.~\ref{alg_spp} that invokes RDO). In each recursive step, we find a min-cut within the current input graph (leveraging an efficient min-cut algorithm in~\cite{stoer1997simple}), 
 partition the graph into two subgraphs accordingly, and call RDO again to order devices in the two subgraphs, respectively. When the input graph contains only one GPU, we assign $rank_l$ (which equals $rank_h$) to it. We order GPUs according to their computed ranks in ascending order.

{\small
\begin{algorithm}[!t]

\caption{Recursive Device Ordering 
- \textbf{\textit{RDO}}}
\label{alg_do}

\renewcommand{\algorithmicrequire}{\textbf{Input:}}
\renewcommand{\algorithmicensure}{\textbf{Output:}}

\begin{algorithmic}[1]
\REQUIRE $G(\mathcal{V}, \mathcal{E}), rank_l, rank_h$ 

\IF {$|\mathcal{V}| == 1$}
	\STATE Set $rank(v) \leftarrow rank_l$
\ELSE
	\STATE $G^1(\mathcal{V}^1, \mathcal{E}^1), G^2(\mathcal{V}^2, \mathcal{E}^2) = \text{min-cut}(G)$
	\STATE \textbf{\textit{RDO}}$(G^1, rank_l, rank_l + |\mathcal{V}^1| - 1)$
	\STATE \textbf{\textit{RDO}}$(G^2, rank_l + |\mathcal{V}^1|, rank_h)$
\ENDIF
\end{algorithmic}
\end{algorithm}
}

The link(s) within the min-cut will be used for inter-stage communication between the two consecutive stages mapped respectively onto two GPUs in the two subgraphs, or AllReduce operation within one stage whose replications are mapped into both subgraphs. Since all GPUs will be used in pipeline training, 
at least one link in each min-cut needs to be used for communication 
(as otherwise the training topology will not be a connected graph).
By dividing the device graphs in this way, link(s) in each min-cut will be used only between two stages or among replicas of one replicated stage, but not between two pairs of consecutive stages or replicas of two replicated stages. 
Hence, links with small bandwidth are minimally exploited while large-bandwidth links are maximally used for inter-stage or AllReduce 
communication, minimizing the maximum communication time on a single communication channel. 

\sloppy \textit{\textbf{2) Pipeline partition, replication and mapping (PRM)}}: Following the device ordering $\{v_1, v_2, \ldots, v_{V}\}$, we leverage dynamic programming to partition and map $\mathcal{D}$ onto the GPUs that minimizes the maximum execution time $\mathcal{W}(|\mathcal{S}|)$ on a single stage or communication channel.

Let $W(l, \xi, r, i)$ denote the optimal (aka minimum) maximum execution time on a single GPU or communication channel, when we partition the first $l$ consecutive layers in $\mathcal{D}$ into $\xi$ stages with the last stage 
replicated into $r$ replica(s) ($r \ge 1$), and place the stages over GPUs $v_1$ to $v_i$. 
We have $\mathcal{W}(|\mathcal{S}|) = \min_{1\le r\le V}W(L, |\mathcal{S}|, r, V)$. $W(l, \xi, r, i)$ can be recursively computed as follows:

\vspace{-5mm}
{\small
\begin{eqnarray*}
	\label{eqn_dp}
	&W(l, \xi, r, i) = \min\limits_{1\le l'\le l-1, 1\le r'\le i-r}\max\{W(l', \xi-1, r', i-r), \\
	&M\frac{d^f_{l', l'+1} + d^b_{l'+1, l'}}{r'rb_{r'r}}, M\frac{\sum\limits_{o = l'+1}^{l}(p^f_o + p^b_o)}{r} + A_{l'+1\rightarrow l}(v_{i-r+1} \rightarrow v_{i})\}
\end{eqnarray*}}
\vspace{-6mm}

\noindent The first term inside $\max$ is the maximal execution time on a single GPU/communication channel by optimal partition of layers $1$ to $l'$ into $\xi-1$ stages (with the last stage replicated on $r'$ GPUs) and mapping them on GPUs $v_1$ to $v_{i-r}$. The second term computes the total communication time on the communication channel between layers $l'$ and $l'+1$, where 
	$b_{r'r} = \min_{v'\in \{v_{i-r-r'+1}, \ldots, v_{i-r}\}, v\in \{v_{i-r+1}, \ldots, v_{i}\}}b_{v', v}$ is the minimal link bandwidth between $r'$ replicas of layer $l'$ and $r$ replicas of layer $l'+1$.
The third term is the training time on the last stage, including processing time of all microbatches over layers $l'+1$ to $l$ replicated on $r$ GPUs 
and time taken by the corresponding AllReduce operation. Here $A_{l'+1\rightarrow l}(v_{i-r+1} \rightarrow v_{i})$ denotes the time for AllReduce operation of layers $l'+1$ to $l$ replicated over GPUs $v_{i-r+1}$ to $v_i$. 

To compute $W(l, \xi, r, i)$, we solve the subproblem of optimally partitioning the first $l'$ layers into $\xi-1$ stages on GPUs $v_1$ to $v_{i-r}$, while replicating the stage containing layers $l'+1$ to $l$ over GPUs $v_{i-r+1}$ to $v_i$. We consider all possible choices of layer $l'$ and various replication strategies of the stage containing $l'$, and decide $W(l, \xi, r, i)$ as the minimal time computed among them.
The detailed dynamic programming PRM algorithm is given in Appendix~\ref{appendix_alg_dp}.
 The following lemma shows that the best stage partition and device mapping that our algorithms identify when the given number of stages varies, achieves a maximum per-stage/communication channel execution time close to optimum.

\vspace{-2mm}
\begin{lemma}
\label{lemma_bmmp}
	Let $\mathcal{W}_{PRM}=\min_{|\mathcal{S}| \in \{1,\ldots, V\}}\mathcal{W}(|\mathcal{S}|)$. 
	$\mathcal{W}_{PRM}$  achieved by RDO and PRM is no larger than $(1 + \Phi)$ times $\mathcal{W}^*$, the optimal (aka minimum) maximum execution time on a single stage or communication channel, 
	where $\Phi=\frac{\max\{p_{max}b_{max}, d_{max}\}}{\Gamma}(\frac{1}{b_{min}} - \frac{1}{b_{max}})$, {\small $d_{max}=\max_{1\le l\le L-1} (d^f_{l,l+1} + d^b_{l+1,l})$, $p_{max}=\max_{1\le l \le L} (p^f_{l} + p^b_{l})$}, and {\small $\Gamma = {\sum\limits_{1\le l \le L}(p^f_l + p^b_l)}/{V}$}. 
\end{lemma}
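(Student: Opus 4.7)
The plan is to show that for any target number of stages, the RDO plus PRM procedure produces a layout whose worst per-stage or per-channel execution time is close to what any layout with that many stages could achieve. Since BPPM sweeps all $|\mathcal{S}|\in\{1,\dots,V\}$ and returns the min, the comparison reduces to a single, well-chosen $|\mathcal{S}|$.

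First I would fix the number of stages $|\mathcal{S}^*|$ that realizes $\mathcal{W}^*$, together with the optimal partition and mapping $(\mathcal{S}^*,\mathcal{F}^*)$. Because BPPM explicitly iterates over every $|\mathcal{S}|\in\{1,\dots,V\}$ and takes the minimum, it suffices to bound $\mathcal{W}(|\mathcal{S}^*|)$, the value returned by RDO plus PRM at that particular choice. Next I would build a feasible layout $(\mathcal{S}^*,\widetilde{\mathcal{F}})$ that respects RDO's linear order: retain the optimum's layer-to-stage partition and its per-stage replica counts, but place each stage's replicas onto a consecutive block of GPUs in the RDO order. Since PRM is an exact dynamic program over this ordering, $\mathcal{W}(|\mathcal{S}^*|)$ is at most the per-stage/channel worst time of this constructed layout, and the problem reduces to comparing that constructed layout with $(\mathcal{S}^*,\mathcal{F}^*)$.

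The third step is a term-by-term comparison. Per-microbatch forward and backward times on each stage are identical between the two layouts, since layer sets and replica counts are preserved. Only the bandwidth-dependent quantities---inter-stage communication and per-stage AllReduce---can degrade. In the worst case a channel or an AllReduce ring that enjoyed bandwidth $b_{max}$ under the optimum is forced to use $b_{min}$ in our layout, contributing at most $M d_{max}(\tfrac{1}{b_{min}}-\tfrac{1}{b_{max}})$ of additive slack per channel and an analogous bound per AllReduce that, after absorbing the parameter-volume factor, can be written as $M p_{max}b_{max}(\tfrac{1}{b_{min}}-\tfrac{1}{b_{max}})$. Taking the larger of these two gives a single additive overhead of $M \max\{p_{max}b_{max},d_{max}\}(\tfrac{1}{b_{min}}-\tfrac{1}{b_{max}})$.

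Finally I would convert this additive gap into a multiplicative one using a work-balance lower bound: every feasible layout must execute total compute $M\sum_l(p^f_l+p^b_l)$ across at most $V$ GPUs, so some stage must carry at least $M\Gamma$ units of compute and hence $\mathcal{W}^*\ge M\Gamma$. Dividing the additive slack by $M\Gamma$ yields exactly $\Phi$, giving $\mathcal{W}_{PRM}\le(1+\Phi)\mathcal{W}^*$. The hard part will be the second step: arguing that the contiguity restriction imposed by RDO loses essentially nothing. An arbitrary $\mathcal{F}^*$ may scatter a stage's replicas across noncontiguous GPU subsets, whereas our reconstruction is forced to pick a consecutive block in the RDO order. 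The recursive min-cut structure of RDO is what makes this work: each cut maximizes the crossing bandwidth among any two-way split of the subgraph, so folding any scattered optimal placement into consecutive RDO positions never introduces a cross-cut bandwidth worse than $b_{min}$ on any channel or AllReduce ring. Making this precise for AllReduce rings whose replicas may straddle several recursion levels will be the technically delicate piece.
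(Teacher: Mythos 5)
The overall skeleton of your argument --- compare $\mathcal{W}(|\mathcal{S}^*|)$ against a reconstructed layout inside PRM's search space, charge the bandwidth degradation additively, then divide by the work-balance lower bound $\mathcal{W}^*\ge M\Gamma$ --- matches the paper's. But there is a genuine gap at the AllReduce term. You assert that the AllReduce overhead, ``after absorbing the parameter-volume factor, can be written as $Mp_{max}b_{max}(\frac{1}{b_{min}}-\frac{1}{b_{max}})$.'' The AllReduce traffic of a stage $s$ replicated on $k$ GPUs is $\frac{2(k-1)}{k}\sum_{l\in s}\alpha_l$, which depends on the gradient sizes $\alpha_l$ --- quantities that do not appear in $\Phi$ and are not in general bounded by $p_{max}b_{max}$ or $d_{max}$. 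The paper eliminates them with an exchange argument that your proposal lacks: working on an auxiliary graph $\mathcal{G}_{max}$ in which every edge has bandwidth $b_{max}$, it observes that if the optimal solution there chooses to replicate $s$ over $k$ GPUs, that choice must be no worse than splitting $s$'s layers into $k$ consecutive non-replicated sub-stages over the same GPUs (a greedy packing whose per-GPU compute exceeds perfect balance by at most $p'_{max}$ per microbatch and whose channels all run at $b_{max}$); comparing the two costs forces $\frac{2(k-1)}{k}\sum_{l\in s}\alpha_l\le M\max\{p_{max}b_{max},d_{max}\}$. Without some such argument the additive slack of a replicated stage cannot be expressed in the quantities appearing in $\Phi$, and the lemma does not follow. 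This exchange argument is also why the paper detours through $\mathcal{G}_{max}$ rather than comparing directly against the real-graph optimum as you do: the non-replicated alternative re-routes inter-stage traffic over GPU pairs not used by the original solution, so its channel bandwidths are only controlled when all edges are uniformly $b_{max}$.

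Conversely, the piece you flag as technically delicate --- folding a scattered optimal placement into consecutive positions of the RDO order --- is not where the difficulty lies. PRM exhaustively searches all contiguous-layer, contiguous-device layouts over that order, and the worst any reconstructed channel or AllReduce ring can do is bandwidth $b_{min}$ by definition of $b_{min}$, which is exactly what the $\frac{1}{b_{min}}$ factor already charges; no property of the min-cut recursion is needed for the bound as stated.
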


\begin{proof}
{
	Consider a new multi-GPU system graph ${\mathcal{G}_{max}}(\mathcal{V}, {\mathcal{E}_{max}})$, where the vertices in ${\mathcal{G}_{max}}$ are the same as in $\mathcal{G}$ while the bandwidth of every edge in ${\mathcal{E}_{max}}$ equals the maximum bandwidth $b_{max}$.
	
We consider the pipeline partition and mapping solution of $\mathcal{D}$ with ${\mathcal{G}}_{max}$ instead of the original $\mathcal{G}$, that minimizes the maximum total execution/communication time on a single stage or communication channel (regardless of the number of stages). We denote the solution as ${\mathcal{S}}_{max}$ and ${\mathcal{F}}_{max}$, and the optimal maximum total execution/communication time on a single stage or communication channel as ${\mathcal{W}}^*_{max}$. There are two cases of the solution:
	
	\noindent$\triangleright$ \textbf{Case 1:} There is at least one stage replicated in the solution.
	
	Without loss of generality, we consider one stage $s$ comprised of layer $l_1$ to layer $l_2$ is replicated over $k$ GPUs. Based on Eqn.~(\ref{eqn_allreduce}), we derive the total execution time $\Psi_s$ of $s$ on the cluster of $k$ GPUs:
	
	{\small
	\begin{equation}
		\label{eqn_proof_bmmp_1}
		\Psi_s = M\frac{\sum\limits_{l\in s}(p^f_l + p^b_l)}{k} + \frac{2(k-1)\sum\limits_{l\in s}\alpha_l}{k b_{max}}
	\end{equation}}
	
	On the other hand, we can construct a partition and mapping solution of layers in stage $s$ without replication. The construction is as follows. We firstly allocate the first $x$ layers in $s$ onto the first GPU until {\small$M\sum\limits_{l=l_1}^{l_{x-1}}(p^f_l + p^b_l) \leq M(\frac{\sum\limits_{l\in s}(p^f_l + p^b_l)}{k})$ while $M\sum\limits_{l=l_1}^{l_x}(p^f_l + p^b_l) > M(\frac{\sum\limits_{l\in s}(p^f_l + p^b_l)}{k})$}. Then, we consider the layer allocation on next GPU following the same procedure. We perform the above allocation iteratively until all the layers have been mapped to GPUs. It is clear that our construction can map all $l_1$ to $l_2$ layers onto the $k$ GPUs. Otherwise, if there are layers left, the sum of the execution time of all the allocated layers should be larger than {\small$M{\sum\limits_{l\in s}(p^f_l + p^b_l)}$}, yielding a contradiction. 
	In addition, our construction ensures that the maximum execution time on any GPU is no large than $M(\frac{\sum\limits_{l\in s}(p^f_l + p^b_l)}{k} + p'_{max})$, where $p'_{max} = \max\limits_{l_1\leq l\leq l_2}(p^f_l + p^b_l)$. 
	We denote the maximum total execution/communication time on a single stage or communication channel achieved with the construction as $\hat{\Psi}_s$. We have
	{\small
	\begin{multline}
		\label{eqn_proof_bmmp_2}
		\Psi_s \leq   \hat{\Psi}_s \\
				 = M\max\{(\frac{\sum\limits_{l = l_1}^{l_2}(p^f_l + p^b_l)}{k} + p'_{max}), \frac{\max_{l_1\leq l \leq {l_2+1}}\{d^f_{l-1,l} + d^b_{l-1,l}\}}{b_{max}}\}
	\end{multline}}
	
	Combining Eqn.~(\ref{eqn_proof_bmmp_1}) and Eqn.~(\ref{eqn_proof_bmmp_2}) and noting that $p'_{max} \leq p_{max}$, we derive that for any replicated stage $s$, the data volume transmitted in its AllReduce operation, $d_s^{AR}$, is:
	{\small
	\begin{align}
		d_s^{AR} &= \frac{2(k-1)\sum\limits_{l\in s}\alpha_l}{k} \nonumber\\
			&\leq M\max\{p'_{max}b_{max}, {\max_{l\in s}\{d^f_{l-1,l} + d^b_{l-1,l}\}} - \frac{\sum\limits_{l\in s}(p^f_l + p^b_l)}{k}b_{max}\}\nonumber\\
			& \leq M\max\{p_{max}b_{max}, d_{max}\}\label{eqn_proof_bmmp_3}
	\end{align}}
	
	Now, let us substitute ${\mathcal{E}}_{max}$ in ${\mathcal{G}}_{max}$ by ${\mathcal{E}}$ while keeping the solution unchanged and respecting the device ordering $\{v_1, v_2, \ldots, v_N\}$. The maximum execution/communication time on a single stage or communication channel in this case is denoted as $\mathcal{{W}}'$. As we only change the bandwidths in $\mathcal{G}_{max}$, we have:
	{\small
	\begin{align}
		\mathcal{{W}}' &\leq {\mathcal{W}}^*_{max} + \max\{Md_{max}, \max_{s\in \mathcal{S}_{repl}}\{d_s^{AR}\}\}(\frac{1}{b_{min}} - \frac{1}{b_{max}}) \nonumber\\
			& \leq {\mathcal{W}}^* + M\max\{p_{max}b_{max}, d_{max}\}(\frac{1}{b_{min}} - \frac{1}{b_{max}})\label{eqn_proof_bmmp_4}
	\end{align}}
	where the first inequality is because we consider the maximal possible increment in execution time due to the decrement in bandwidth, and the second inequality is due to Eqn.~(\ref{eqn_proof_bmmp_3}) and ${\mathcal{W}}^*_{max} \leq \mathcal{W}^*$.

	In addition, we have {\small$\mathcal{W}_{PPM} \leq \mathcal{{W}}'$}.
	Noting that $M\Gamma$, denoting the case of evenly distributed all workload across $N$ GPUs, is a lower bound of $\mathcal{W}^*$, we have:
	
	{\small
	\begin{align}
		\mathcal{W}_{PRM} &\leq \mathcal{{W}}' \leq {\mathcal{W}}^* + M\max\{p_{max}b_{max}, d_{max}\}(\frac{1}{b_{min}} - \frac{1}{b_{max}})\nonumber\\
			& = \mathcal{W}^* + M\frac{\max\{p_{max}b_{max}, d_{max}\}}{\mathcal{W}^*}(\frac{1}{b_{min}} - \frac{1}{b_{max}})\mathcal{W}^* \nonumber\\
			&\leq (1 + \frac{\max\{p_{max}b_{max}, d_{max}\}}{\Gamma}(\frac{1}{b_{min}} - \frac{1}{b_{max}}))\mathcal{W}^* \label{eqn_proof_bmmp_5}
	\end{align}}
	
	\noindent$\triangleright$ \textbf{Case 2:} There is no stage replicated in the solution. Following similar steps as in case 1, we have:
	{\small
	\begin{align}
		\mathcal{W}_{PPM} &\leq \mathcal{{W}}' \leq {\mathcal{W}}^*_{max} + Md_{max}(\frac{1}{b_{min}} - \frac{1}{b_{max}})\nonumber\\
			 &\leq (1 + \frac{d_{max}}{\Gamma}(\frac{1}{b_{min}} - \frac{1}{b_{max}}))\mathcal{W}^*\label{eqn_proof_bmmp_6}
	\end{align}}
	where the second inequality is because we only consider inter-stage communication in this case, and the last inequality is due to $M\Gamma \leq \mathcal{W}^*$.
	
	We combine Eqn.~(\ref{eqn_proof_bmmp_5}) and Eqn.~(\ref{eqn_proof_bmmp_6}) to derive:
	
	{\small
	\begin{equation}
		\mathcal{W}_{PRM} \leq (1 + \frac{\max\{p_{max}b_{max}, d_{max}\}}{\Gamma}(\frac{1}{b_{min}} - \frac{1}{b_{max}}))\mathcal{W}^*
	\end{equation}
	}
	
}
\end{proof}

\vspace{-4mm}
\subsection{Complete Synchronous Pipeline Planning Algorithm}
\label{sec::complete_alg}

\begin{algorithm}[!t]
\caption{Synchronous Pipeline Planning - \textbf{\textit{SPP}}}
\label{alg_spp}

\renewcommand{\algorithmicrequire}{\textbf{Input:}}
\renewcommand{\algorithmicensure}{\textbf{Output:}}

\begin{algorithmic}[1]

\REQUIRE $G(\mathcal{V}, \mathcal{E}), \mathcal{D}$
\ENSURE $\mathcal{\bar{S}}, \mathcal{\bar{F}}, \bar{\textbf{e}}, T_{SPP}$ 
\STATE \textbf{\textit{RDO}}($\mathcal{G}(\mathcal{V}, \mathcal{E}), 1, V$)
\STATE Obtain device ordering $\{v_1, \ldots, v_V\}$ according to $rank(v), \forall v \in \mathcal{V}$
\STATE $T_{SPP} \leftarrow \textbf{INF} $
\FOR {
$\xi \in \{1, 2, \ldots, V\}$}
	\FOR {$r \in \{1, 2, \ldots, V\}$}
		\STATE $W(L, \xi, r, V), \mathcal{S}_{r}, \mathcal{F}_{r} \leftarrow$ \textbf{\textit{PRM}}($G(\mathcal{V}, \mathcal{E}), \{v_1, \ldots, v_V\}, \mathcal{D}, L, V, \xi, r$)
	\ENDFOR
	\STATE Set $\mathcal{S}$ and $\mathcal{F}$ to $\mathcal{S}_{r}$ and $\mathcal{F}_{r}$ that achieve the minimum $W(L, \xi, r, V)$
	\STATE $T_{PE}, \textbf{e} \leftarrow$\textbf{\textit{PE}}($\mathcal{G}, \mathcal{S}, \mathcal{F}$)
		\STATE $T_{SPP} \leftarrow T_{PE}, \mathcal{\bar{S}} \leftarrow \mathcal{S}, \mathcal{\bar{F}} \leftarrow \mathcal{F}, \bar{\textbf{e}} \leftarrow \textbf{e}$ if $T_{PE} < T_{SPP}$
\ENDFOR
\STATE Return $\mathcal{\bar{S}}, \mathcal{\bar{F}}, \bar{\textbf{e}}, T_{SPP}$
\end{algorithmic}
\end{algorithm}

\vspace{-1mm}
Our complete synchronous pipeline planning ({\textit{SPP}}) algorithm is given in Alg.~\ref{alg_spp}, which produces model partition $\mathcal{\bar{S}}$, device mapping $\mathcal{\bar{F}}$ and execution schedule $\bar{\textbf{e}}$. We first leverage RDO in Alg.~\ref{alg_do} to obtain a linear ordering of all GPUs (lines 1-2). We next vary the number of stages 
from $1$ to $V$ (line 4): 
given a stage number to partition the model into, we vary $r$ and call PRM to compute the best stage partition and device mapping (lines 5-8) that achieve $\mathcal{W}(|\mathcal{S}|)$; we then invoke PE in Alg.~\ref{alg_schedule} to compute execution schedule of microbatches over these partitions on the respective devices (line 9). 
We identify the best stage partition number as the one minimizing the makespan of a training iteration (lines 10-12) together with the corresponding $\mathcal{\bar{S}}$, $\mathcal{\bar{F}}$ and $\bar{\textbf{e}}$. 

\vspace{-2mm}
\begin{theorem}
	The makespan of a training iteration achieved by {\textit{SPP}}, $T_{SPP}$, is less than $(2 + \frac{4V-4}{M})(1 + \Phi)$ times the optimal makespan, $T^*$.
	\label{th_approx_ratio}
\end{theorem}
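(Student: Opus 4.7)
My plan is to chain Lemma~\ref{lemma_ss} with Lemma~\ref{lemma_bmmp} via the intermediate quantity $\mathcal{W}(|\mathcal{S}|)$, and then to close the loop by observing that $\mathcal{W}^{*}$ itself lower-bounds the optimal makespan $T^{*}$. Concretely, I intend to establish the chain
\[
T_{SPP}\;\le\;\Bigl(2+\tfrac{4V-4}{M}\Bigr)\mathcal{W}_{PRM}\;\le\;\Bigl(2+\tfrac{4V-4}{M}\Bigr)(1+\Phi)\,\mathcal{W}^{*}\;\le\;\Bigl(2+\tfrac{4V-4}{M}\Bigr)(1+\Phi)\,T^{*}.
\]

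For the first inequality, I will prove the bridging bound $T_{PE}\le (2+\tfrac{4|\mathcal{S}|-4}{M})\mathcal{W}(|\mathcal{S}|)$ for every value of $|\mathcal{S}|$. Starting from Lemma~\ref{lemma_ss} in the form $T_{PE}\le (1+\tfrac{4|\mathcal{S}|-4}{M})M\mathcal{C}+\max_{s\in\mathcal{S}_{repl}}A_{s}$, I will compare the definitions of $M\mathcal{C}$ and $\mathcal{W}(|\mathcal{S}|)$ term by term: each per-microbatch computation or communication quantity inside $\mathcal{C}$, once scaled by $M$, appears inside $\mathcal{W}(|\mathcal{S}|)$ (the replicated-stage entry of $\mathcal{W}$ additionally includes $A_{s}\ge 0$), so $M\mathcal{C}\le\mathcal{W}(|\mathcal{S}|)$. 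Similarly, for every $s\in\mathcal{S}_{repl}$, $A_{s}\le M\tfrac{\sum_{l\in s}(p^{f}_{l}+p^{b}_{l})}{|\mathcal{F}(s)|}+A_{s}\le\mathcal{W}(|\mathcal{S}|)$, hence $\max_{s\in\mathcal{S}_{repl}}A_{s}\le\mathcal{W}(|\mathcal{S}|)$. Adding the two pieces yields the bridging bound. Because Alg.~\ref{alg_spp} sweeps $\xi=|\mathcal{S}|$ over $\{1,\dots,V\}$ and records the schedule of smallest $T_{PE}$, evaluating at the $\xi$ attaining $\mathcal{W}_{PRM}=\min_{\xi}\mathcal{W}(\xi)$ and using $\xi\le V$ to upper-bound the coefficient by $2+\tfrac{4V-4}{M}$ delivers $T_{SPP}\le(2+\tfrac{4V-4}{M})\mathcal{W}_{PRM}$. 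The middle inequality is then a direct application of Lemma~\ref{lemma_bmmp}.

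The last step, $\mathcal{W}^{*}\le T^{*}$, is a feasibility/lower-bound argument against the dependency constraints in Sec.~\ref{sec:scheduleproblem}. For any fixed partition $\mathcal{S}$ and mapping $\mathcal{F}$, every feasible schedule's makespan must exceed (i) $M\sum_{l\in s}(p^{f}_{l}+p^{b}_{l})$ on any non-replicated stage $s$, because one GPU must serialize the $M$ microbatches; (ii) $M\tfrac{\sum_{l\in s}(p^{f}_{l}+p^{b}_{l})}{|\mathcal{F}(s)|}+A_{s}$ on any replicated stage $s$, because by pigeonhole at least one replica must spend $\tfrac{M\sum_{l\in s}(p^{f}_{l}+p^{b}_{l})}{|\mathcal{F}(s)|}$ of compute, after which the AllReduce-dependency constraint forces $A_{s}$; and (iii) $M(c^{f}_{s_{n},s_{n+1}}+c^{b}_{s_{n+1},s_{n}})$ on any inter-stage channel, because all forward activations and backward gradients must pass through it. The maximum of these lower bounds is exactly $\mathcal{W}(\mathcal{S},\mathcal{F})$, so taking the infimum over $(\mathcal{S},\mathcal{F})$ on both sides gives $T^{*}\ge\mathcal{W}^{*}$, and composing the three inequalities proves the theorem. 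The main subtlety I anticipate is item (ii): I need to justify that the $A_{s}$ summand really does add to the in-stage compute lower bound rather than overlap with it, which follows from the AllReduce-dependency constraint together with the pigeonhole observation; the rest of the argument is substitution of the two earlier lemmas and arithmetic on the approximation factor.
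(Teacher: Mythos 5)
Your proposal is correct and follows essentially the same route as the paper's own proof: chain Lemma~\ref{lemma_ss} with Lemma~\ref{lemma_bmmp} through $\mathcal{W}_{PRM}$, using $M\mathcal{C}\le\mathcal{W}(|\mathcal{S}|)$, $\max_{s\in\mathcal{S}_{repl}}A_s\le\mathcal{W}(|\mathcal{S}|)$, $|\mathcal{S}|\le V$, and $\mathcal{W}^*\le T^*$. Your write-up merely makes explicit the term-by-term comparison and the work-conservation lower bound $T^*\ge\mathcal{W}^*$ that the paper asserts without elaboration.
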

\vspace{-2mm}
\begin{theorem}
	Our complete synchronous pipeline planning Alg.~\ref{alg_spp} runs in polynomial time.
\end{theorem}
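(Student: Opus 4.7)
The plan is to bound the running time of each subroutine invoked by Alg.~\ref{alg_spp} and then combine the bounds through the outer loops. Since the correctness/approximation claims are already handled by Lemma~\ref{lemma_ss}, Lemma~\ref{lemma_bmmp} and Theorem~\ref{th_approx_ratio}, the only task here is a complexity accounting, and everything reduces to showing that each basic operation is polynomial in the input parameters $L$, $V$, $M$ and $|\mathcal{E}|$.

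First I would analyze \textbf{\textit{RDO}} (Alg.~\ref{alg_do}). Each recursive call invokes one min-cut on the current subgraph; using the Stoer--Wagner algorithm~\cite{stoer1997simple}, a min-cut on a graph with $v$ vertices and $e$ edges costs $O(v\,e + v^2 \log v)$. The recursion partitions $\mathcal{V}$ into two nonempty subsets at each level, so its recursion tree has exactly $V$ leaves and at most $V-1$ internal nodes; the total cost is therefore bounded by a polynomial in $|\mathcal{V}|$ and $|\mathcal{E}|$.

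Next I would bound \textbf{\textit{PRM}}. Its dynamic programming table $W(l,\xi,r,i)$ has $O(L\,V^3)$ entries, and each entry is computed by a minimization over at most $O(L\,V)$ pairs $(l',r')$; evaluating the AllReduce term $A_{l'+1\rightarrow l}(v_{i-r+1}\rightarrow v_i)$ and the min-bandwidth $b_{r'r}$ only requires reading a precomputed table of pairwise bandwidths. Hence one call of PRM costs $O(L^2 V^4)$, which is polynomial. I would also verify that the pairwise data needed (the $b_{v,v'}$, the $p^f_l, p^b_l$, the $d^f, d^b$) can be precomputed once in polynomial time, so the PRM call does not hide any super-polynomial work.

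Then I would bound \textbf{\textit{PE}} (Alg.~\ref{alg_schedule}). The ordering loop (lines 3--8) processes each of the $M$ microbatches through at most $4|\mathcal{S}|-3 \leq 4V$ blocks, so it terminates in $O(MV)$ queue operations. The simulation loop (lines 10--21) handles a bounded number of events per (microbatch, block) pair, giving $O(MV)$ total work. Thus one call to PE is polynomial in $M$ and $V$.

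Finally I would put these together for \textbf{\textit{SPP}}. The outer loop iterates $\xi\in\{1,\ldots,V\}$ and the inner loop $r\in\{1,\ldots,V\}$, calling PRM $O(V^2)$ times and PE $O(V)$ times, plus one invocation of RDO. Summing the per-call bounds yields a total running time polynomial in $L$, $V$, $M$ and $|\mathcal{E}|$, which proves the theorem. The main bookkeeping obstacle, if any, will be checking that the AllReduce time $A_{l'+1\rightarrow l}(v_{i-r+1}\rightarrow v_i)$ and the inter-stage min-bandwidth terms used inside the DP transition are indeed available in $O(1)$ after a polynomial-time preprocessing pass; once this is established, the rest is straightforward multiplication of polynomial factors.
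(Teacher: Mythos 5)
Your proposal is correct and follows essentially the same route as the paper's own argument: bound RDO by the number of min-cut calls in its recursion tree, bound PRM by the size of its (memoized) dynamic-programming table times the per-entry work, bound PE by the number of (microbatch, block) events, and multiply through the $O(V^2)$ outer iterations of SPP. The paper's proof is terser but makes the same key observation you do, namely that PRM is polynomial only because repeated subproblem results are stored and reused.
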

\vspace{-4mm}

\section{Performance Evaluation}
\label{sec::eval}
\vspace{-2mm}

We evaluate {\textit{SPP}} with both testbed experiments and simulation studies. 
\vspace{-2mm}
\subsection{Testbed experiments}

\noindent\textbf{Implementation.} 
We implement {\textit{SPP}} using C++ and Python on Tensorflow 1.14.1~\cite{abadi2016tensorflow}. We use Tensorflow profiler to collect runtime data of each layer of each DNN model (\eg~forward/backward computation time, parameter size and activation size) over 20 training iterations. 
We 
assign a priority to each stage or AllReduce operation (implemented using NCCL collective AllReduce~\cite{NCCL}) based on our computed execution order, such that they can be scheduled by TensorFlow execution engine accordingly.  

\noindent\textbf{Testbed.} 
We evaluate {\textit{SPP}} in two testbed environments: (1) One consists of 4 GPU servers, inter-connected by a Dell Z9100-ON switch, with 50Gbps peak bandwidth between any two servers. Each server has one 8-core Intel E5-1660 CPU, two GTX 1080Ti GPUs and one 50GbE NIC. (2) The other 
is a single server equipped with 4 Tesla V100 GPUs, two 10-core Intel Xeon 
E5-2630 v4 CPUs and a 100GbE NIC. GPUs in the sever are connected with 128Gbps PCIe bus.

\noindent\textbf{DNN models.} We train 7 representative 
DNN models: three image classification models 
on the ImageNet dataset~\cite{deng2009imagenet} and four NLP models 
on SQuAD2.0 dataset~\cite{rajpurkar2018know} (Table~\ref{table_model}). 
The number of microbatches and microbatch size for training each model are set as the maximum number$\times$size (aka mini-batch size) without causing OOM (out of memory) for most baselines. 
The large batch sizes we use are consistent with common practice
~\cite{fan2021dapple}.
To run {\textit{SPP}}, we modified ResNet152 by ignoring shortcut connections and Inception-V3 by aggregating parallel branches (branches with the same start point and end point) as one layer. We apply \textit{SPP} to the modified models to decide the 
strategies, 
and then train the original models (without the modifications) using the obtained strategies. 

{\small
\begin{table}[!t]
\caption{Benchmark DNN models}
\begin{center}
\fontsize{8}{9}\selectfont
\begin{tabular}{|l|c|c|c|}
\hline
Model	     & \begin{tabular}[c]{@{}c@{}}\# of \\ parameters\end{tabular} & \begin{tabular}[c]{@{}c@{}}\# of \\ microbatches\end{tabular} & \begin{tabular}[c]{@{}c@{}}microbatch size\\ (\# of samples)\end{tabular} \\ \hline
VGG19~\cite{simonyan2014very}       & 144M             & 8                                                              & 32                                                         \\ \hline
ResNet152~\cite{he2016deep}    & 60M              & 4                                                              & 4                                                          \\ \hline
Inception-V3~\cite{szegedy2016rethinking} & 24M              & 8                                                              & 32                                                         \\ \hline
Transformer~\cite{vaswani2017attention}  & 55M              & 8                                                              & 32                                                         \\ \hline
BERT-large~\cite{devlin2018bert}   & 340M             & 4                                                              & 4                                                          \\ \hline
XLNet-large~\cite{yang2019xlnet}  & 550M             & 4                                                              & 4                                                          \\ \hline
BERT-48~\cite{devlin2018bert}   	 & 640M             & 4                                                              & \begin{tabular}[c]{@{}l@{}}4 - 1080Ti$\times$8\\ 2 - V100$\times$4  \end{tabular}                                                          \\ \hline
\end{tabular}
\label{table_model}
\end{center}
\vspace{-2mm}
\end{table}}

\noindent\textbf{Baselines.} {\em SPP} is compared with 4 state-of-the-art schemes: 
(i) Data Parallelism (DP), with 
each GPU training the complete model 
with $\frac{\mbox{mini-batch size}}{\mbox{\# of GPUs}}$ amount of data; 
(ii) GPipe~\cite{huang2019gpipe}; 
(iii) PipeDream~\cite{harlap2018pipedream}; 
(iv) HetPipe~\cite{Park2020hetpipe} 
(see Sec.~\ref{ppparallelism} for details of the latter three).
Unless stated otherwise, we enforce a synchronization barrier at the end of each training iteration in PipeDream and HetPipe, 
removing the negative impact of asynchronous training on model convergence.

\begin{figure*}[!t]
\begin{minipage}{0.25\textwidth}
\includegraphics[width=1\hsize]{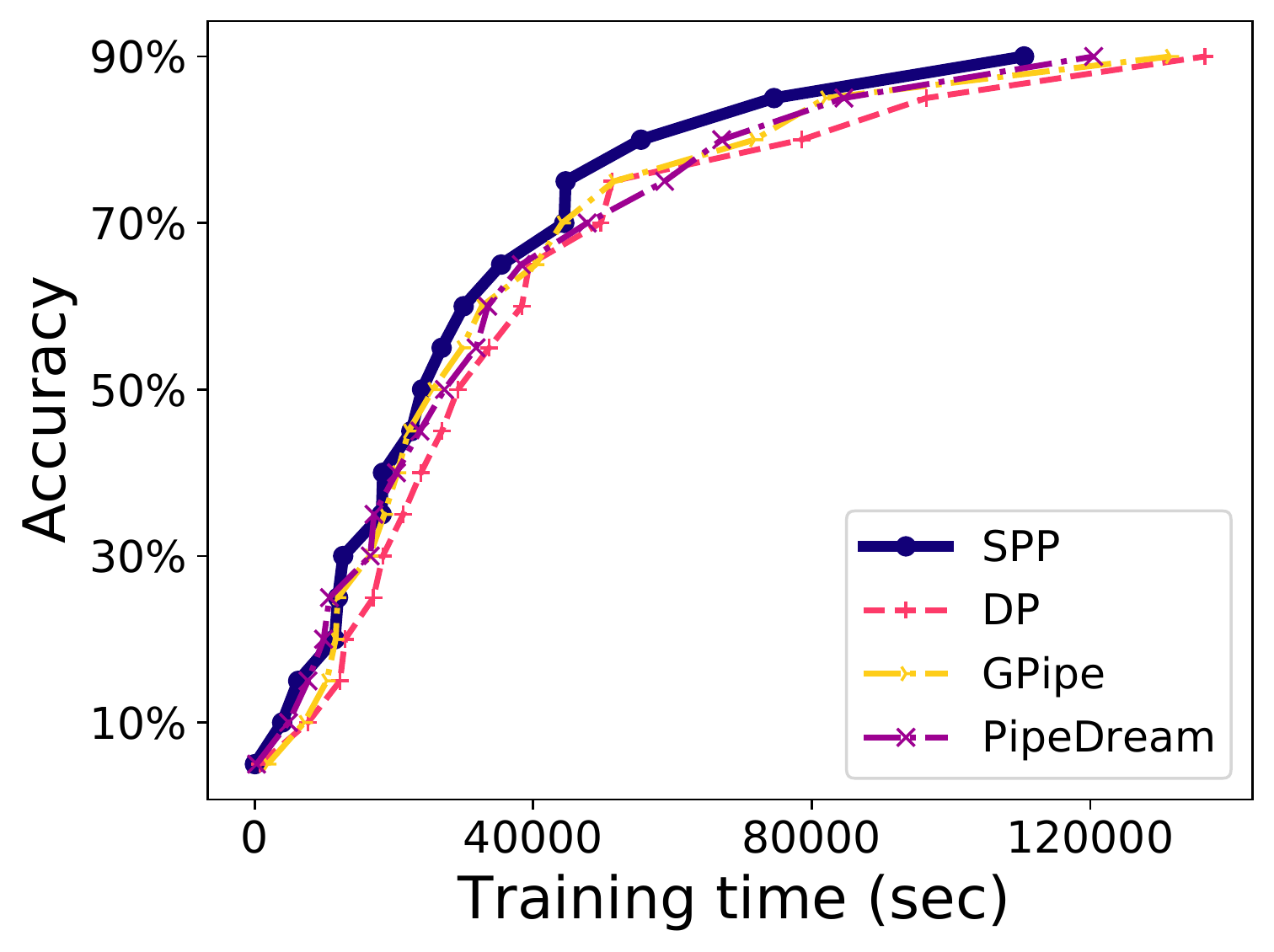}
  \captionof{figure}{VGG19 training progress: \revise{\textit{SPP}} vs. baselines}
  \label{vgg_converge}
\end{minipage}
    \hfill
\begin{minipage}{0.73\textwidth}
{\fontsize{8}{9}\selectfont
\begin{tabular}{|c|c|c|c|c|c|c|}
\hline
Model                         & Testbed           & \textit{SPP}   & \begin{tabular}[c]{@{}c@{}}DP\\ (Speed-up)  \end{tabular} & \begin{tabular}[c]{@{}c@{}}GPipe\\ (Speed-up)  \end{tabular}& \begin{tabular}[c]{@{}c@{}}PipeDream\\ (Speed-up)  \end{tabular} & \begin{tabular}[c]{@{}c@{}}HetPipe\\ (Speed-up)  \end{tabular} \\ \hline
\multirow{2}{*}{VGG19}        & 1080Ti$\times$8 & 1.799 & 2.882 (60.2\%)  & 2.120 (17.9\%)  & 1.949 (8.3\%)       & 2.696 (49.9\%)    \\ \cline{2-7} 
                              & V100$\times$4   & 0.983 & 1.245 (26.7\%)  & 1.004 (2.1\%)   & 1.024 (4.2\%)       & -                 \\ \hline
\multirow{2}{*}{ResNet152}    & 1080Ti$\times$8 & 0.732 & 0.896 (22.4\%)  & 1.214 (65.8\%)  & OOM (-)             & 0.843 (15.2\%)    \\ \cline{2-7} 
                              & V100$\times$4   & 0.832 & 1.209 (45.3\%)  & 1.041 (25.1\%)  & 0.873 (4.9\%)       & -                 \\ \hline
\multirow{2}{*}{Inception-V3} & 1080Ti$\times$8 & 0.303 & 0.420 (38.6\%)  & 0.551 (81.8\%)  & 0.656 (116.5\%)     & 0.408 (34.7\%)    \\ \cline{2-7} 
                              & V100$\times$4   & 0.357 & 0.663 (85.7\%)  & 0.587 (64.4\%)  & 0.919 (157.4\%)     & -                 \\ \hline
\multirow{2}{*}{Transformer}  & 1080Ti$\times$8 & 0.640 & 0.944 (47.5\%)  & 1.234 (92.8\%)  & 1.118 (74.7\%)      & 0.766 (19.7\%)     \\ \cline{2-7} 
                              & V100$\times$4   & 1.065 & 2.533 (137.8\%) & 1.487 (39.6\%)  & 1.830 (71.8\%)      & -                 \\ \hline
\multirow{2}{*}{BERT-large}   & 1080Ti$\times$8 & 0.409 & 0.524 (28.1\%)  & 0.472 (15.4\%)  & 0.421 (2.9\%)      & 0.525 (28.4\%)     \\ \cline{2-7} 
                              & V100$\times$4   & 0.952 & 2.269 (138.3\%) & 1.665 (74.9\%)  & 1.084 (13.9\%)      & -                 \\ \hline
\multirow{2}{*}{XLNet-large}  & 1080Ti$\times$8 & 1.299 & 1.388 (6.7\%)   & 1.696 (30.6\%)  & 1.384 (6.5\%)       & 1.628 (25.3\%)    \\ \cline{2-7} 
                              & V100$\times$4   & 1.437 & 1.842 (28.3\%)  & 1.720 (19.7\%)  & 1.690 (17.6\%)      & -                 \\ \hline
\multirow{2}{*}{BERT-48}   & 1080Ti$\times$8 & 0.762 & OOM (-)  & 1.885 (147.4\%)  & 1.266 (66.1\%)      &  1.377 (80.7\%)    \\ \cline{2-7} 
                              & V100$\times$4   & 0.855 & 1.656 (93.7\%) & 1.199 (40.2\%)  & 1.160 (35.7\%)      & -                 \\ \hline
\end{tabular}}

\captionof{table}{Per-iteration training time (in seconds) of different DNN models}
\label{table_iteration_time}
\end{minipage}
\vspace{-7mm}
\end{figure*}

\noindent\textbf{{Per-iteration training speed-up.}} We compare {\textit{SPP}} with all baselines in terms of per-iteration training time in the first testbed environment (1080Ti$\times$8). 
In the other environment (V100$\times$4), we 
omit HetPipe as all four GPUs are on the same server, reducing HetPipe to PipeDream solutions. 
In Table~\ref{table_iteration_time}, the speed-up is computed by $\frac{\text{Baseline time} - \text{SPP time}}{\text{SPP time}}$. 
{\textit{SPP}} outperforms the baselines in all cases. 
While both DP and HetPipe require an AllReduce operation to synchronize gradients, 
{\textit{SPP}} incurs less parameter synchronization traffic and maximally overlaps communication with computation within each training iteration. 
As a result, {\textit{SPP}} outperforms them 
by more than 20\% in most cases. 
The large speed-up for {\textit{SPP}} over baselines on Inception-V3 demonstrates that our design handles a model with non-uniform layer computation time well.
As VGG-19 has a small number of layers that can be easily optimally partitioned, we observe minor gain comparing {\em SPP} with PipeDream and GPipe. For BERT-large on 1080Ti$\times$8 testbed, PipeDream partitions the model into uniform stages, achieving similar good performance as {\em SPP}. 

{
\noindent\textbf{End-to-end training performance.} We next compare training convergence among \textit{SPP}, DP, GPipe and PipeDream. 
For PipeDream, we use its original asynchronous pipeline design. 
Fig.~\ref{vgg_converge} shows the training progress of VGG19 on the V100$\times$4 testbed to achieve a target 90\% top-5 accuracy~\cite{deng2009imagenet}. {\textit{SPP}} achieves the target accuracy using the least training time. Despite only marginal speed-up in per-iteration training time as compared to PipeDream (using its synchronous pipeline mode that we implemented), here {\textit{SPP}} outperforms PipeDream (with original asynchronous pipeline design) by 9.05\% in terms of the end-to-end training time. 
This is because 
PipeDream's asynchronous pipeline training slows down the model convergence progress, \revise{as training microbatches on outdated versions of model parameters~\cite{ho2013more}.}
}

\begin{figure}[!th]
\begin{minipage}[t]{0.45\columnwidth}
\includegraphics[width=\textwidth]{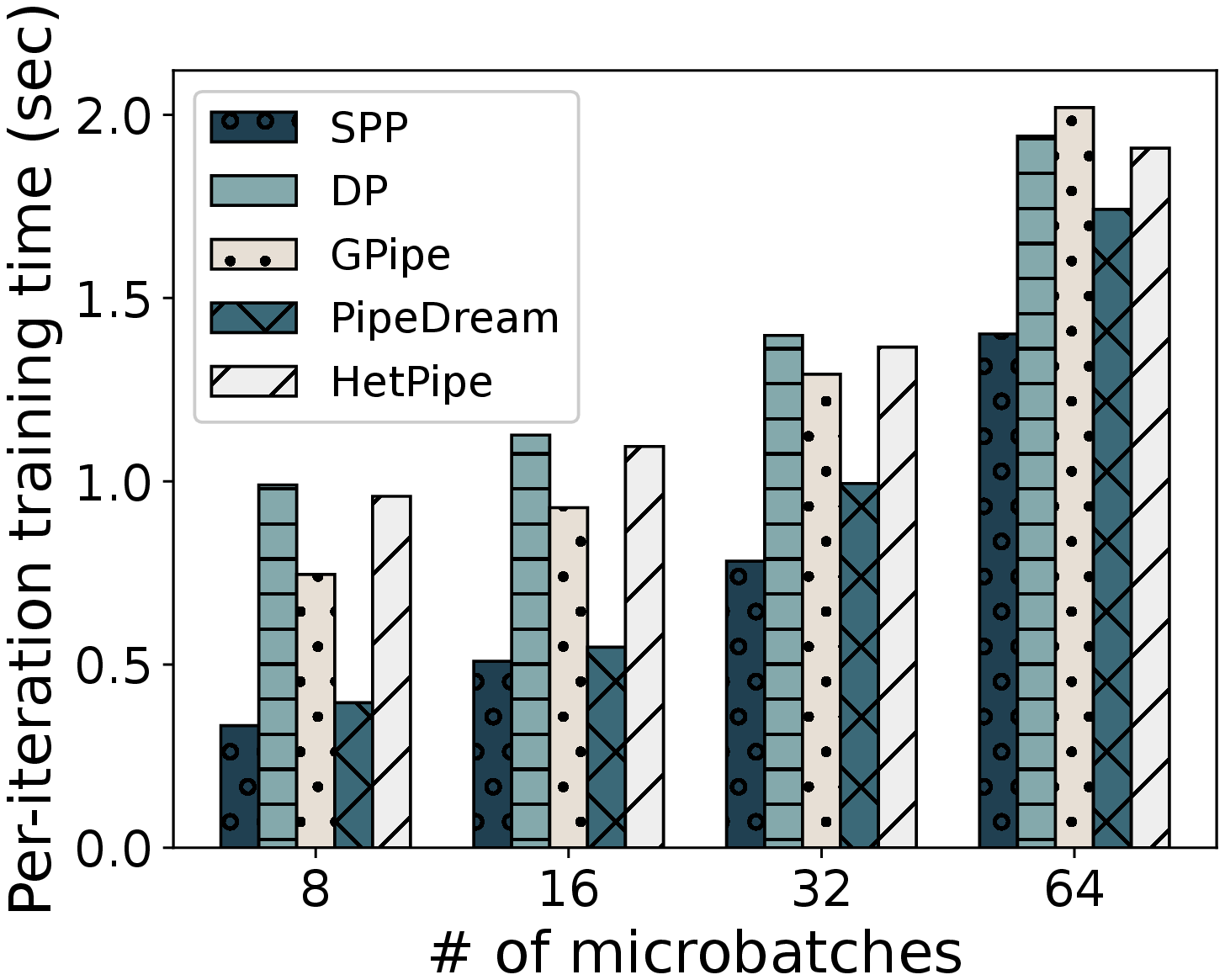}
\caption{Training time: different \# of microbatches}
\label{diff_batch}
\end{minipage}
\begin{minipage}[t]{0.45\columnwidth}
\includegraphics[width=\textwidth]{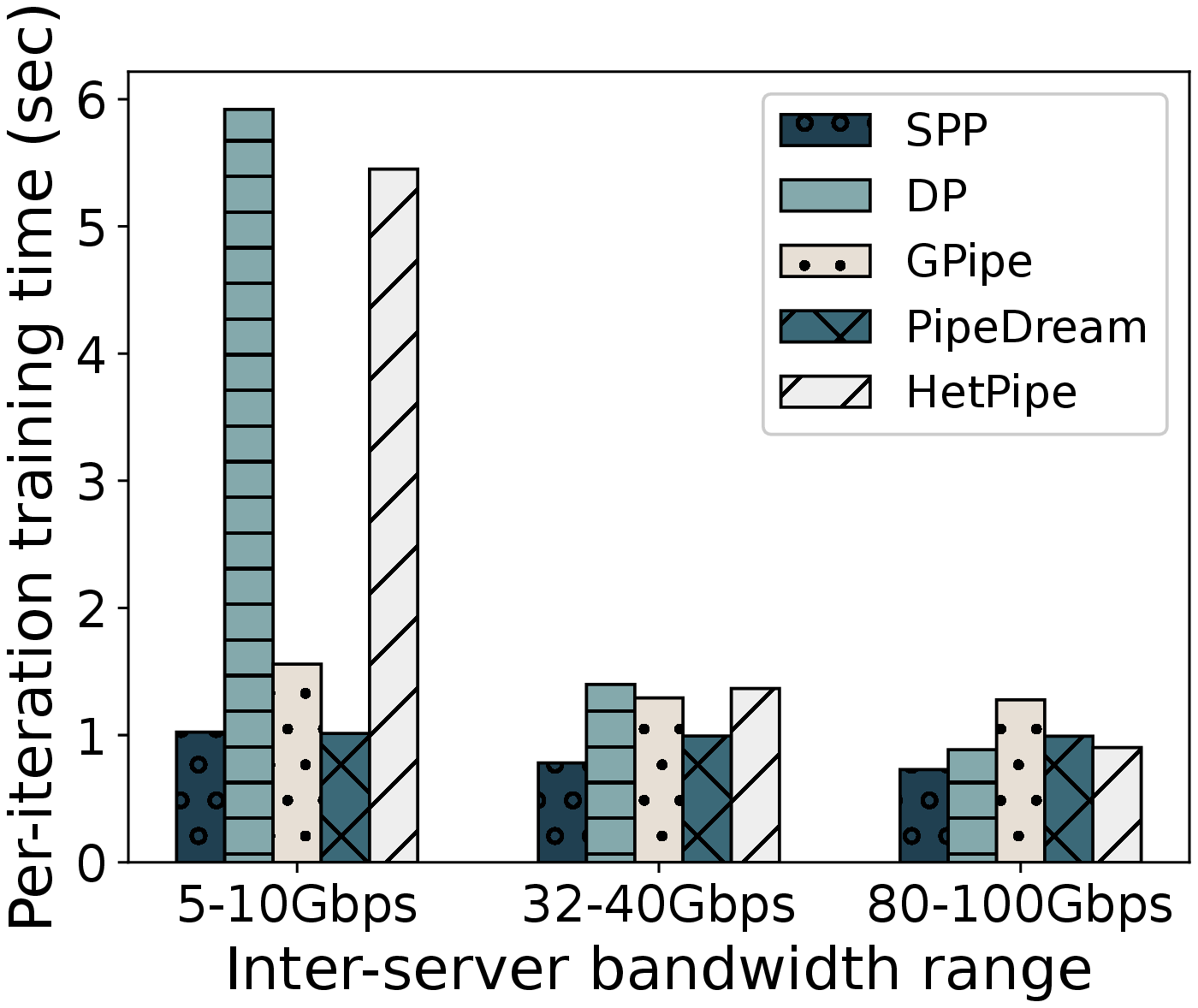}
\caption{Training time: different inter-server bandwidth levels}
\label{diff_band}
\end{minipage}
\begin{minipage}[t]{0.45\columnwidth}
\includegraphics[width=\textwidth]{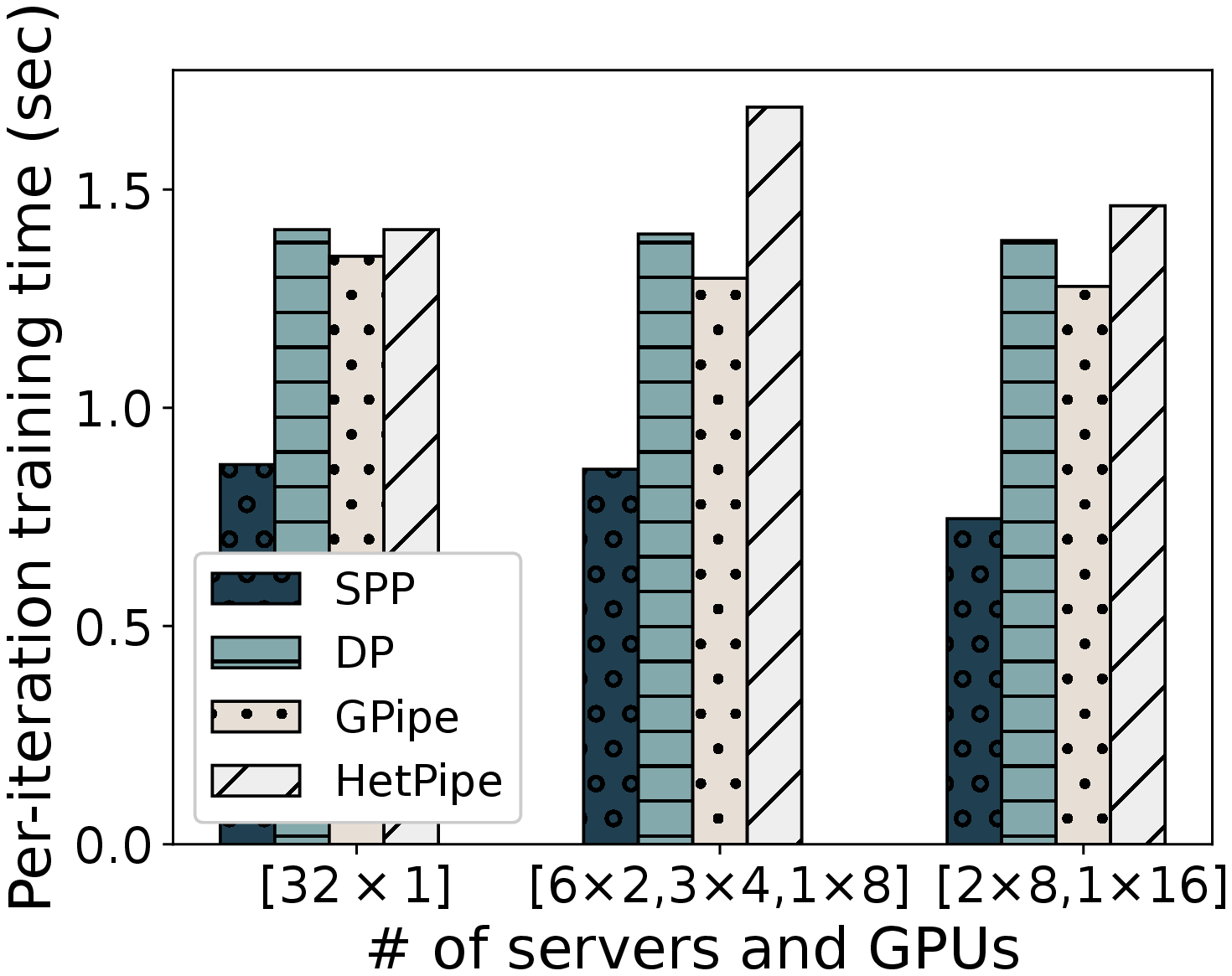}
\caption{Training time: different inter-GPU connectivity}
\label{diff_inter_connectivity}
\end{minipage}
\centering
\begin{minipage}[t]{0.45\columnwidth}
\includegraphics[width=\textwidth]{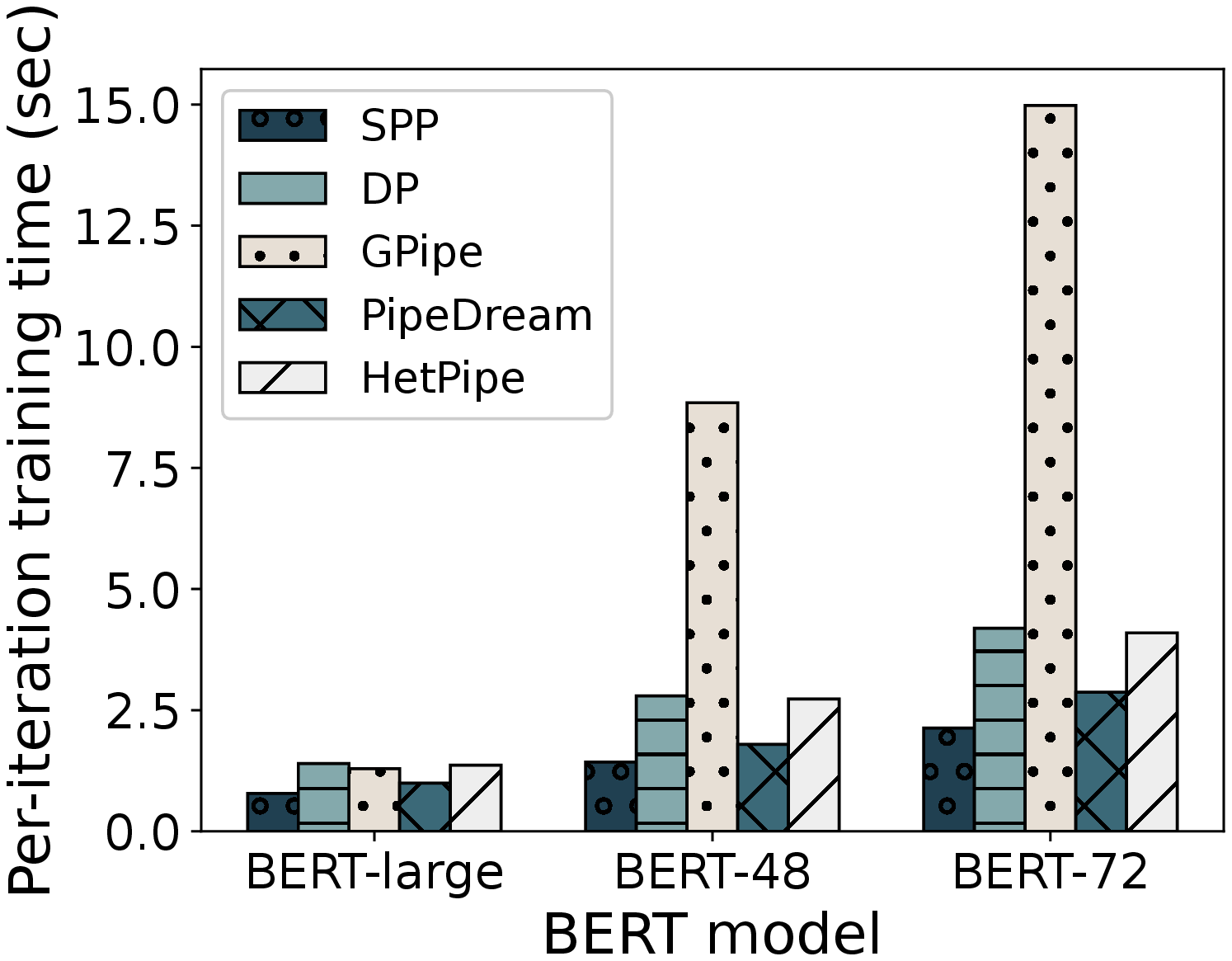}
\caption{Training time: BERT with different \# of layers}
\label{diff_bert_model}
\end{minipage}
\begin{minipage}[t]{0.45\columnwidth}
\includegraphics[width=\textwidth]{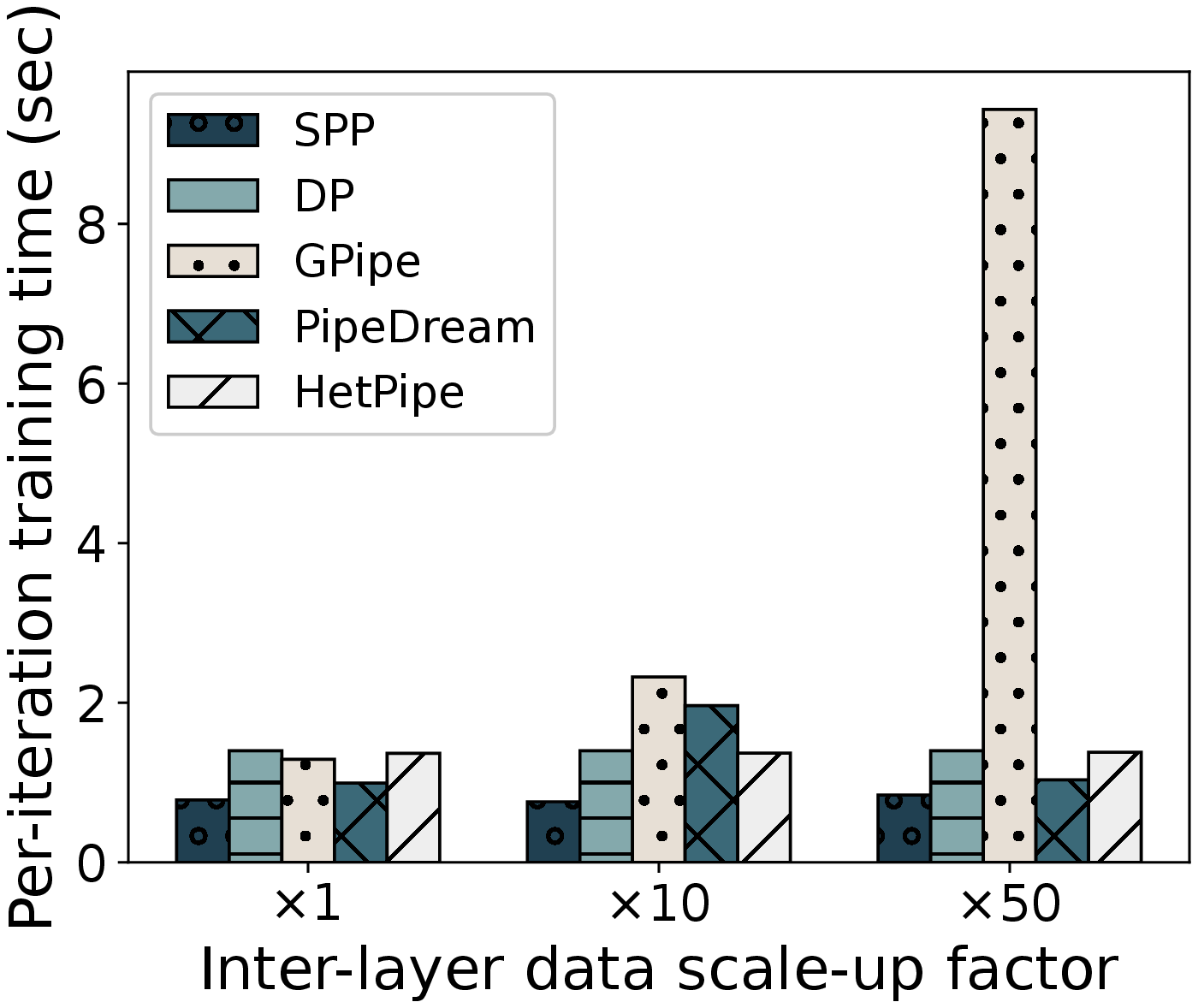}
\caption{Training time: different inter-layer data sizes}
\label{diff_comm_size}
\end{minipage}
\begin{minipage}[t]{0.46\columnwidth}
\includegraphics[width=1.1\textwidth]{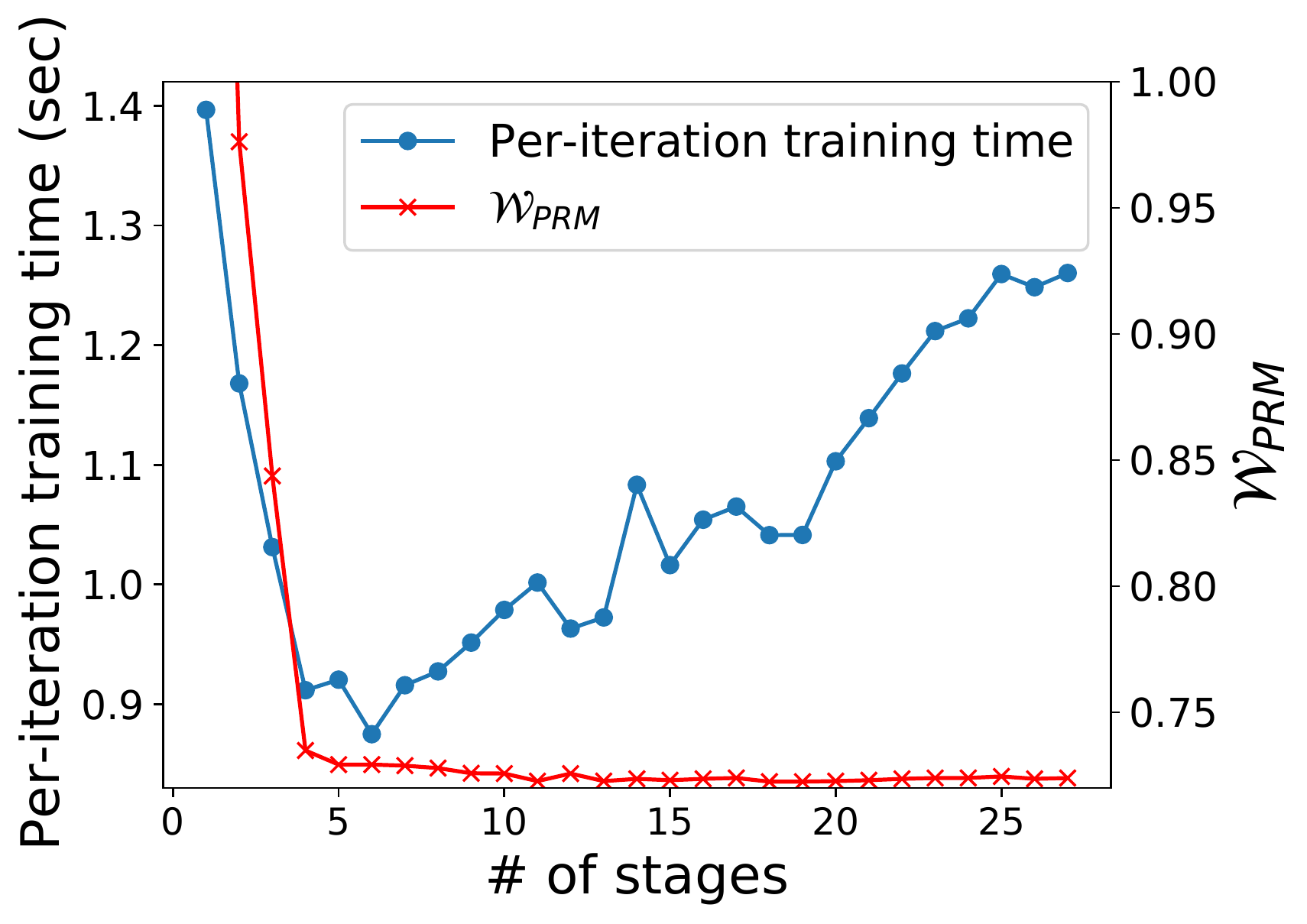}
\caption{\revise{Training time \& $\mathcal{W}_{PRM}$: different \# of stages}}
\label{diff_stage_num}
\end{minipage}
\end{figure}
\setlength{\textfloatsep}{0pt}

\vspace{-3mm}
\subsection{Trace-driven Simulation}
\vspace{-1mm}

\noindent\textbf{Settings.} By default, we simulate training of BERT-large (27-layers including 24 transformer layers) with 32 microbatches and a microbatch size of 6 on 8 servers, each equipped with 4 GPUs. We drive our simulation using profiled data 
collected by running the DNN on a V100 GPU. 
3 servers have intra-server bandwidth between [96, 128] Gbps (representing PCIe links~\cite{tallent2017evaluating}), and the other 5 servers 
[160, 200] Gbps (representing NVLink connections~\cite{amaral2017topology}). 
By default, inter-server bandwidth is set within [32, 40] Gbps to emulate an RDMA network~\cite{lu2018multi}. 


\subsubsection{Different numbers of microbatches} 
Fig.~\ref{diff_batch} shows that {\textit{SPP}} achieves significant training speed-up compared with the four baselines at different microbatch numbers ($M$). 
We also observed (figure omitted due to space limit) that higher GPU utilization is achieved with {\textit{SPP}} when $M$ is larger, implying the diminishing gap between {\textit{SPP}} and the optimal solution (which maximally utilizes GPUs for the best training speed). This is consistent with Theorem~\ref{th_approx_ratio}: as $M$ increases, the approximation ratio becomes smaller (\ie, better).

\subsubsection{Different inter-server bandwidth levels} We emulate three types of inter-server networks with low, medium and high bandwidth, respectively.
Fig.~\ref{diff_band} 
shows that per-iteration training time of {\textit{SPP}}, GPipe and PipeDream is stable 
at different bandwidth levels, 
while performance of DP and HetPipe drops dramatically at small bandwidth. This is because the former three 
overlap most inter-stage communication and AllReduce operations with computation, achieving higher GPU utilization even with small communication bandwidth. 
DP and HetPipe require an AllReduce operation over inter-server connections at the end of each training iteration, which incurs large communication time with small bandwidth.

\vspace{-1mm}
\subsubsection{Different inter-GPU connectivity} We 
next vary the number of available GPUs on servers and the server number. In Fig.~\ref{diff_inter_connectivity}, 
$[6\times2, 3\times4,1\times 8]$ represents training the model over 6 servers each with 2 GPUs, three 4-GPU servers and one 8-GPU server. 
{\textit{SPP}} achieves the best performance in all inter-GPU connection topologies. 
\subsubsection{Different numbers of layers} Fig.~\ref{diff_bert_model} compares the training performance 
of BERT-large, BERT-48 (48 Transformer layers) and BERT-72 (72 Transformer layers). As the model size increases, it is more difficult to obtain optimal model partition and device mapping solution. However, 
 performance of {\textit{SPP}} and PipeDream remains quite stable, 
with {\textit{SPP}} outperforming PipeDream by more than 20\% on the three models.
\subsubsection{Different inter-layer data sizes} We investigate the impact of activation sizes which influence inter-stage communication time, 
by scaling the 
activation data in BERT-large by different factors. 
Fig.~\ref{diff_comm_size} shows that the per-iteration training time with {\textit{SPP}} remains similar with the increase of activation sizes, due to 
the excellent communication and computation overlap it achieves. GPipe 
 tends to partition the model into more stages, resulting in more inter-stage communication time. 
\subsubsection{Different numbers of stages} Lemma~\ref{lemma_ss} gives that the performance of our algorithm is related to: (1) the number of stages $|\mathcal{S}|$, and (2) $\mathcal{W}_{PRM}$, the maximum time to process all microbatches on a single stage or communication channel. While PipeDream only aims at minimizing $\mathcal{W}_{PRM}$, {\textit{SPP}} strikes a balance between the two factors. In Fig.~\ref{diff_stage_num}, 
$\mathcal{W}_{PRM}$ first decreases when the model is partitioned into more stages, and becomes stable starting from the stage number of 4. The main reason is that for training BERT-large (with 24 uniform Transformer layers) on 32 GPUs, the per-stage training time with $4$ stages is already quite close to the optimal per-stage training time.\footnote{With 4 stages, we roughly have 6 layers per stage and each stage replicated to 8 GPUs; per-stage time is 6p/8 (p denotes per-layer computation time) plus AllReduce time. Optimal per-stage training time is lower bounded by 24p/32.} 
The 
training time first decreases as $\mathcal{W}_{PRM}$ drops,  and then increases when $\mathcal{W}_{PRM}$  stabilizes and $|\mathcal{S}|$ becomes the dominant factor, 
which is consistent with Lemma~\ref{lemma_ss}. 
This indicates that only minimizing $\mathcal{W}_{PRM}$ does not yield the best solution. 
{\textit{SPP}} strategically selects the 6-stage partition solution to minimize per-iteration training time.

\vspace{-2mm}
\section{Conclusion}
\label{sec::conclusion}
This paper designs efficient algorithms for expediting synchronous pipeline training of DNNs over arbitrary inter-GPU connectivity. We partition a given DNN, replicate and distribute the partitions over available GPUs, 
and design an efficient scheduler to order pipeline execution of microbatches over partitioned stages on different GPUs, minimizing the training time. 
Our comparative experiments on two GPU testbeds prove that our design outperforms state-of-the-art approaches up to 157\%. Trace-driven simulations further 
show our algorithms' superiority under various settings.


\appendices

\section{Pipeline Partition, Replication and Mapping Algorithm}
\label{appendix_alg_dp}
The pipeline partition, replication and mapping algorithm (PRM) is given in Alg.~\ref{alg_dp} 
If $l$ or $i$ is less than $\xi$, PRM terminates immediately as there is no feasible solution (lines 1-3). If $\xi$ equals 1 and $r$ is $i$, indicating there is only one stage, PRM groups the first $l$ layers as a single stage and replicates it over $\{v_1, \ldots, v_i\}$ (lines 4-12). Otherwise, we use dynamic programming to compute the optimal partition and mapping 
(lines 13-25). 

\begin{algorithm}[!t]

\caption{Pipeline Partition, Replication and Mapping - \textbf{\textit{PRM}}}
\label{alg_dp}

\renewcommand{\algorithmicrequire}{\textbf{Input:}}
\renewcommand{\algorithmicensure}{\textbf{Output:}}

\begin{algorithmic}[1]

\REQUIRE $G(\mathcal{V}, \mathcal{E}), \{v_1, v_2, \ldots, v_{N}\}, \mathcal{D}, l, n, \xi, r$
\ENSURE $W(l, \xi, r, n), \mathcal{S}, \mathcal{F}$
\IF {$l < \xi$ or $n < \xi$}
	\STATE Return $\textbf{INF}, \text{None}, \text{None}$
\ENDIF
\IF {$\xi = 1$ and $r = n$}
		\STATE $W(l, \xi, r, n) \leftarrow M\sum\limits_{i = 1}^{l}\frac{p^f_i + p^b_i}{n} + A_{1\rightarrow l}(v_{1} \rightarrow v_{n})$
		\STATE $s \leftarrow \{1, \ldots, l\}$
		\STATE $\mathcal{F}(s) \leftarrow \{v_1, \ldots, v_n\}$
\ELSE
	\IF {$\xi = 1$ or $r = n$}
		\STATE Return $\textbf{INF}, \text{None}, \text{None}$
	\ENDIF
\ENDIF
\STATE $min\_max\_time = \textbf{INF}, $
\FOR {$l' \in \{1, 2, \ldots, l-1\}$}
	\FOR {$r' \in \{1, 2, \ldots, n-r\}$}
		\STATE $W(l', {\xi-1}, r', {n-r}), \mathcal{S'}, \mathcal{F}\leftarrow \text{\textbf{\textit{PRM}}}(G, \{v_1, \ldots, v_n\}, \mathcal{D}, l', n-r, \xi-1, r')$
		\STATE Set:
		{\small
			\begin{multline*}
				max\_time \leftarrow \max\{W(l', \xi-1, r', n-r), \\M\frac{d^f_{l', l'+1} + d^b_{l'+1, l'}}{rr'b_{rr'}}, \\M\frac{\sum\limits_{i = l'+1}^{l}(p^f_i + p^b_i)}{r} + A_{l'+1\rightarrow l}(v_{n-r+1} \rightarrow v_{n})\}
			\end{multline*}}
		\IF {$min\_max\_time > max\_time$}
			\STATE $min\_max\_time \leftarrow max\_time$
			\STATE $s \leftarrow \{l'+1, l'+2, \ldots, l\}$
			\STATE $\mathcal{S} \leftarrow \mathcal{S'}\cup \{s\}$
			\STATE Cancel previous mapping, and Set $\mathcal{F}(s) \leftarrow \{v_{k+1}, \ldots, v_{n}\}$
		\ENDIF
	\ENDFOR
\ENDFOR
\STATE Return $min\_max\_time, \mathcal{S}, \mathcal{F}$
\end{algorithmic}
\end{algorithm}

\newpage
\bibliographystyle{IEEEtran}
\bibliography{./main}

\end{document}